\journal{Journal of Differential Equations}
\newtheorem{theorem}{\color{black}\indent \textbf{Theorem}}[section]
\newtheorem{proposition}{\color{black}\indent Proposition}[section]
\newtheorem{definition}{\color{black}\indent Definition}[section]
\newtheorem{remark}{\color{black}\indent Remark}[section]
\newtheorem{example}{\color{black}\indent Example}[section]
\begin{document}
\begin{frontmatter}
	\title{Symmetries and Weighted Integrability of Vector Fields with Jacobi Multipliers}

\author[a]{Cristina Sardon}
\author[b]{Xuefeng Zhao\corref{cor1}}

\address[a]{Departamento de Matemática Aplicada, Universidad Politécnica de Madrid, Escuela de Edificación.  Av. Juan de Herrera 6, 28040 Madrid, Spain}
\address[b]{College of Mathematics, Jilin University, Changchun 130012, P. R. China}

\cortext[cor1]{Corresponding author. Email address: \texttt{zhaoxuef@jlu.edu.cn}}

		\begin{abstract}
In this paper, we investigate analytic divergence-free vector fields and vector fields admitting a Jacobi multiplier on $n$-dimensional Riemannian manifolds. 
We first introduce a functional acting on the space of divergence-free vector fields that quantifies the fraction of the manifold foliated by ergodic invariant tori, and establish a Kolmogorov--Arnold--Moser (KAM) type theorem for such systems. 
We prove that this functional is continuous at analytic, nondegenerate, Arnold--integrable divergence-free vector fields with respect to the $C^{\omega}$ topology, and analyze the persistence and breakdown of invariant $(n-1)$-dimensional tori under small perturbations. 

Extending this framework, we study vector fields possessing Jacobi multipliers, which generalize divergence-free fields by preserving a weighted volume form $d\mu_{\rho}=\rho\,\Omega$. 
We derive the local structure of their symmetry fields and show that, under suitable nondegeneracy and resonance conditions, every symmetry must be tangent to the invariant tori. 
We then define the \emph{weighted partial integrability functional} $m_{\rho}(V)$, measuring the weighted fraction of phase space occupied by quasi-integrable invariant tori of a vector field satisfying $\operatorname{div}(\rho V)=0$. 
Finally, we develop a numerical algorithm based on finite-time Lyapunov exponents to compute $m_{\rho}(V)$, and illustrate its behavior on a weighted nonlinear divergence-free system, showing the transition from integrable to partially integrable and irregular regimes as the nonlinear parameter $\alpha$ increases.
\end{abstract}

		\begin{keyword}
	KAM theorem, Divergence-free, Hamiltonian system, Symmetry, First integrals, Riemannian manifold
		\end{keyword}
	\end{frontmatter}
	\section{Introduction}
	This paper investigates divergence-free vector fields and vector fields with Jacobi multipliers on \( n \)-dimensional Riemannian manifolds. We introduce a functional defined on the space of divergence-free vector fields, which quantifies the measure of the foliated regions of the manifold. These regions are determined by invariant tori, which are ergodically associated with the divergence-free vector fields. By employing an \( n \)-dimensional volume-preserving mapping theorem, we establish a  KAM-type theorem tailored to divergence-free vector fields.
	In addition, we demonstrate that this functional remains continuous at Arnold-integrable, non-degenerate divergence-free vector fields within the \( C^{\omega} \) topology. For divergence-free vector fields that locally depend on \( n-1 \) action variables and one angle variable, we apply the KAM theorem for generalized Hamiltonian systems to show that, under small perturbations, most invariant tori \( T^1 \) are preserved. We also prove that, under certain conditions, the \( n-1 \)-dimensional invariant manifold of such vector fields does not persist when subjected to small perturbations.
	Finally, we extend our study to a broader class of vector fields equipped with Jacobi multipliers. For these fields, we investigate the form of symmetry in the case where the vector fields depend on fewer action variables than angle variables, subject to certain conditions.
	
The study of volume-preserving systems, originally motivated by problems in fluid dynamics and magnetohydrodynamics, has become a compelling topic due to the rich and intricate structure of their orbits. Over the past few decades, considerable attention has been devoted to various aspects of volume-preserving flows and maps. For example, Arnold and Khesin proposed a reformulation of hydrodynamics using the infinite-dimensional group of volume-preserving diffeomorphisms on $\mathbb{R}^3$ \cite{Arnold}. The dynamics of Lagrangian tracers in incompressible fluids and the behavior of magnetic field lines have been investigated in \cite{Holmes,Lau,Mezic1994,Sotiropoulos}, while numerical methods for integrating such systems were developed in \cite{Feng,Quispel}. Moreover, many researchers have studied the incompressible Euler equations and have made significant contributions in this area (see \cite{Baldi,Enciso,Gavrilov,Khesin,Torres} and the references therein).

	It is well known that volume-preserving flows and maps can exist in odd-dimensional spaces. As a consequence, their structure is generally less rigid compared to that of symplectic systems. This difference introduces technical challenges, making the direct application of traditional  KAM theorem techniques infeasible in such settings. A concise explanation of these difficulties can be found in de la Llave \cite{Llave}.
	Despite these obstacles, significant progress has been achieved over recent decades in developing  KAM-type theories applicable to odd-dimensional volume-preserving maps and generalized Hamiltonian systems. Notable contributions in this direction include the works of Herman \cite{Herman}, Cheng and Sun \cite{Cheng}, Xia \cite{Xia1992}, Zhao \cite{Zhao5} and Li and Yi \cite{Li}. In this paper,  we introduce a functional that quantifies the measure of foliated regions determined by ergodic invariant tori of a divergence-free vector fields on $n$-dimensional Riemannian manifolds. Based on a  KAM-type result for such vector fields establish by us, we also prove the continuity of the functional at non-degenerate Arnold integrable vector fields in the $C^{\omega}$ topology. 
	
As is well known, Jacobi multipliers \cite{Cariñena,Jacobi} and symmetries \cite{Bogoyavlenskij2,Cariñena2,Cariñena3,Koiller} of vector fields play a fundamental role in understanding their geometric and dynamical properties. 
On the one hand, the existence of a Jacobi multiplier together with a sufficient number of first integrals allows one to integrate the vector field by quadratures \cite{Goriely,Whittaker}. 
On the other hand, the presence of enough symmetries can also ensure integrability of the vector field \cite{Bogoyavlenskij2,Huang,Kozlov}. 
In particular, if $X$ is a Hamiltonian vector field on a $2n$-dimensional symplectic manifold admitting $n$ independent and mutually commuting first integrals (equivalently, $n$ commuting Hamiltonian symmetries), then it is integrable in the sense of Arnold--Liouville \cite{Liouville}. 
Moreover, under the Kolmogorov non-degeneracy condition \cite{Kolmogorov}, all symmetries of the vector field are tangent to the invariant tori \cite{Bogoyavlenskij2,Brouzet}. 
In this paper, under analogous non-degeneracy and resonance conditions, we use the Fourier series method to characterize the local structure of symmetries for vector fields admitting a Jacobi multiplier, showing that every symmetry is tangent to the invariant tori when the distribution satisfies suitable rank conditions. 
Additionally, when a symmetry is parallel to the vector field, further structural constraints arise, leading to new geometric identities.

Beyond the analytic theory of symmetries, we introduce a new quantitative framework for vector fields with Jacobi multipliers by defining the \emph{weighted partial integrability functional}~$m_\rho(V)$. 
This functional extends the classical KAM measure of surviving invariant tori to vector fields preserving a weighted volume form~$d\mu_\rho=\rho\,\Omega$. 
We prove its continuity at analytic nondegenerate Arnold--integrable systems and develop a numerical scheme, based on finite-time Lyapunov exponents, for its practical computation. 
These results provide a unified analytic and computational characterization of partial integrability in the weighted setting, revealing how the persistence of invariant tori depends on the nonlinear parameter~$\alpha$.

The paper is organized as follows. 
In Section~2, we introduce basic definitions and geometric preliminaries for divergence-free vector fields on Riemannian manifolds. 
In Section~3, we define a functional acting on the space of divergence-free vector fields that measures the proportion of the manifold foliated by ergodic invariant tori, and establish a KAM-type theorem for analytic divergence-free systems. 
Section~4 studies the continuity of this functional at analytic, nondegenerate, Arnold--integrable divergence-free vector fields under the $C^{\omega}$ topology. 
In Section~5, we analyze divergence-free vector fields depending locally on $n{-}1$ action variables and one angle variable, showing that certain invariant $(n{-}1)$-dimensional manifolds fail to persist under small perturbations. 
Section~6 extends the framework to vector fields with Jacobi multipliers, deriving their local symmetry structure and the conditions under which such symmetries must be tangent to invariant tori. 
Finally, Section~7 introduces the \emph{weighted partial integrability functional} $m_{\rho}(V)$, which generalizes the previous construction to the class of vector fields satisfying $\operatorname{div}(\rho V)=0$. 
We establish its continuity, develop a numerical algorithm for its computation based on finite-time Lyapunov exponents, and present an explicit example illustrating the transition from integrable to partially integrable and irregular regimes as the nonlinear parameter~$\alpha$ increases.

	\section{Divergence-free vector fields on Riemannian manifolds}
We first review some foundational material; further details can be found in \cite{Khesin}.

Throughout this paper, we work with a closed analytic $n$-dimensional manifold $M$ endowed with a smooth Riemannian metric $(\cdot,\cdot)$ and its corresponding volume form $\Omega$. We normalize $\Omega$ so that the total volume of $M$ is unity, i.e.,  
\begin{align*}  
	\int_M \Omega = 1.  
\end{align*}  
For any subset $U \subset M$, its measure with respect to $\Omega$ will be written as $\text{means}(U)$.

A vector field $V$ on $M$ is said to be \textit{divergence-free} (or \textit{solenoidal}) with respect to $\Omega$ if the Lie derivative of $\Omega$ along $V$ is zero:  
\[
L_V \Omega = 0.
\]  
This condition, denoted $\operatorname{div} V = 0$, is equivalent to the closure of the $(n-1)$-form $i_V \Omega$, since $L_V = i_V d + d i_V$. If $i_V \Omega$ is in fact exact, then $V$ is called \textit{exact divergence-free} (or \textit{globally solenoidal}) relative to $\Omega$ \cite{Arnold}.  

In local coordinates $(x_1, \dots, x_n)$, the volume form may be written as  
\[
\Omega = \rho(x_1, \dots, x_n) dx_1 \wedge \cdots \wedge dx_n
\]  
for a positive function $\rho$. In these coordinates, the divergence-free condition becomes  
\begin{align}
	\frac{\partial (\rho V_1)}{\partial x_1} + \cdots + \frac{\partial (\rho V_n)}{\partial x_n} = 0,
\end{align}  
where $V = (V_1, \dots, V_n)$.  

Given a vector field $V$, let $V^b$ denote the $1$-form dual to $V$ via the Riemannian metric, so that  
\[
(V, W) = V^b(W)
\]  
for any vector field $W$ on $M$. As shown in \cite{Wells}, $V$ is divergence-free if and only if $V^b$ is coclosed:  
\[
d^* V^b = 0,
\]  
with $d^*$ being the codifferential operator. The \textit{gradient} of a function $f$ on $M$ is the vector field $\nabla f$ defined by $(\nabla f)^b = df$.

For $k \geq 0$, let $C^k(M)$ denote the Hölder space of order $k$ on $M$ (when $k \in \mathbb{N}$, this consists of $k$-times continuously differentiable functions), and let $\text{Vect}^k(M)$ be the corresponding space of vector fields with the same regularity.  
For $k \geq 1$, define $\text{SVect}^k$ and $\text{SVect}^k_{ex}$ as the closed subspaces of $\text{Vect}^k(M)$ comprising divergence-free and exact divergence-free vector fields, respectively.  

Under the duality $V \mapsto V^b$, the Helmholtz decomposition for vector fields corresponds to the Hodge decomposition for $1$-forms (see \cite{Taylor,Wells}). Specifically, every $V \in \text{Vect}^k$ with $k \geq 1$ decomposes uniquely as  
\begin{align}\label{fenjie}
	V = \nabla f + W + \pi,
\end{align}
where $W \in \text{SVect}^k_{ex}$ and $\pi \in \mathcal{H} \subset \text{Vect}^k$ is a harmonic vector field, i.e.,  
\[
\Delta \pi^b = 0,
\]  
or equivalently,  
\[
d \pi^b = 0 \quad \text{and} \quad d^* \pi^b = 0.
\]  
The three components are mutually $L^2$-orthogonal. By Hodge theory, harmonic vector fields are smooth and constitute a finite-dimensional subspace of $\text{Vect}^\infty(M)$, whose dimension is the first Betti number of $M$ and is independent of $k$.  
Moreover, for non-integer $k$, the projections onto $\nabla f$, $W$, and $\pi$ are continuous on $\text{Vect}^k(M)$. Note that $V$ is divergence-free precisely when the gradient component $\nabla f$ in \eqref{fenjie} is trivial.

\begin{example}
	Let $M = \mathbb{T}^n = (\mathbb{R}/2\pi\mathbb{Z})^n$ be the $n$-torus with the standard flat metric. The natural volume form is  
	\[
	\Omega = dx_1 \wedge \cdots \wedge dx_n.
	\]  
	Any vector field on $\mathbb{T}^n$ can be written as  
	\[
	V = f_1 \frac{\partial}{\partial x_1} + \cdots + f_n \frac{\partial}{\partial x_n},
	\]  
	with each $f_i$ periodic in all coordinates. We say $V$ is \textit{divergence-free} if  
	\[
	\sum_{i=1}^{n} \frac{\partial f_i}{\partial x_i} = 0.
	\]  
	If, in addition, the averages of all $f_i$ vanish:  
	\begin{align*}
		\int_M f_1 \Omega = \cdots = \int_M f_n \Omega = 0,
	\end{align*}  
	then $V$ is \textit{exact divergence-free}.

	Geometrically, $V$ is divergence-free exactly when the $(n-1)$-form $i_V \Omega$ is closed:  
	\[
	d(i_V \Omega) = 0.
	\]  
	For exactness, we further require  
	\[
	\int i_V \Omega \wedge dx_1 = \cdots = \int i_V \Omega \wedge dx_n = 0,
	\]  
	which guarantees that $i_V \Omega$ is exact, i.e., $i_V \Omega = d\beta$ for some $(n-2)$-form $\beta$.

	These integrability conditions have a physical interpretation: vanishing averages of $f_1, \dots, f_n$ imply that $V$ does not displace the system's mass center. On the torus, this ensures the field is exact.  
	The averages of $f_1, \dots, f_n$ also determine the cohomology class of $V$. More precisely, they specify the class of $i_V \Omega$ in the de Rham cohomology of $\mathbb{T}^n$. This illustrates how divergence-free fields on $\mathbb{T}^n$ split into exact divergence-free fields and a finite-dimensional harmonic part determined by cohomology.
\end{example}
\section{The KAM persistence of invariant $n-1$-tori for divergence-free vector fields}
Let $V$ be a smooth divergence-free vector field on a closed $n$-dimensional manifold $M$ endowed with a volume form $\Omega,$ and let $T^{n-1}\subset M$ be an invariant $n-1$-torus of $V.$ We assume that in a neighborhood $\mathcal O$ of the torus $T^{n-1},$ one can construct coordinates $(\theta_1,...,\theta_{n-1},I),$ where $(\theta_1,...,\theta_{n-1})\in\mathbb T^{n-1}=(\mathbb R/2\pi\mathbb Z)^{n-1}$ and $I\in(-\gamma,\gamma)(\gamma>0),$ such that $\Omega|_{\mathcal O}=d\theta_1\wedge\cdots\wedge d\theta_{n-1}\wedge dI$ and $T^{n-1}=\{I=0\}.$

Next we state a KAM theorem for divergence-free vector fields, for which we assume that there are functions $f_i(I),i=1,...,n-1$ defined on $Q:=(-\gamma,\gamma),$ satisfying the following  KAM non-degeneracy conditions:

1. For each $I\in Q$ we have $\sum_{i=1}^{n-1}f_i^2(I)\neq0,$ and the torus $\mathbb T^{n-1}\times\{I\}$ is invariant for the vector field $V,$ which assumes the form (for this value of $x_n$):
\begin{align}\label{Bao}
	\dot \theta_1=f_1(I),\cdots,\dot\theta_{n-1}=f_{n-1}(I),\quad\dot I=0.
\end{align}

2. The Wronkian of $f_1,...,f_{n-1}$ is uniformly bounded from 0 on $Q$, i.e., there is a positive $\tau$ such that for each $I\in Q,$ we have
	\begin{align}\label{NZ}
	\left|\mathrm{det}\left(\begin{array}{cccc}
		f_1(I)&f_2(I)&\cdots&f_{n-1}(I)\\
		f_1'(I)&f_2'(I)&\cdots&f_{n-1}'(I)\\
		\vdots&\vdots&\vdots&\vdots\\
		f_1^{(n-2)}(I)&f_2^{(n-2)}(I)&\cdots&f_{n-1}^{(n-2)}(I)
	\end{array}\right)\;\right|\geq\tau>0\;\forall\;I\in(-\gamma,\gamma).
\end{align}
Observe that Condition 1 above implies that $i_V(d\theta_1\wedge\cdots\wedge d\theta_{n-1}\wedge dI)$ is exact in the domain $\mathcal O$ (in particular, $V$ is divergence-free). Indeed,
\begin{align*}
	\alpha:=i_V(d\theta_1\wedge\cdots\wedge d\theta_{n-1}\wedge dI)=\sum_{i=1}^{n-1}(-1)^{i+1}f_i(I)d\theta_1\wedge\cdot\wedge d\theta_{i-1}\wedge \widehat {d\theta_i}\wedge\cdots d\theta_{n-1}\wedge dI,
\end{align*}
so $\alpha=d\beta,$ where $\beta$ is the 1-form
\begin{align*}
	\beta=\sum_{i=1}^{n-1}(-1)^{n+i-1}\left(\int_0^If_i(s)ds\right)d\theta_1\wedge\cdot\wedge d\theta_{i-1}\wedge \widehat {d\theta_i}\wedge\cdots d\theta_{n-1}.
\end{align*}
\begin{remark}\label{R3}
	We also note that if $V$ satisfies Condition 1 and 2 and is divergence-free with respect to a volume form $p(I,\theta_1,...,\theta_{n-1})d\theta_1\wedge\cdots\wedge d\theta_{n-1}\wedge dI$, then it is easy to check that we must have $p=p(I).$ In fact, we have
	\begin{align*}
		L_V(p(I,\theta_1,...,\theta_{n-1})d\theta_1\wedge\cdots\wedge d\theta_{n-1}\wedge dI)&=(L_Vp(I,\theta_1,...,\theta_{n-1}))d\theta_1\wedge\cdots\wedge d\theta_{n-1}\wedge dI\\
		&\quad+p(I,\theta_1,...,\theta_{n-1})L_V(d\theta_1\wedge\cdots\wedge d\theta_{n-1}\wedge dI)\\
		&=(L_Vp(I,\theta_1,...,\theta_{n-1}))d\theta_1\wedge\cdots\wedge d\theta_{n-1}\wedge dI=0,
	\end{align*}
	which means that 
	\begin{align}\label{pz}
		f_1\frac{\partial p}{\partial\theta_1}+\cdots+f_{n-1}\frac{\partial p}{\partial\theta_{n-1}}=0,
	\end{align}
	so we can get $p=p(I).$  In fact, Condition 2 implies that the frequencies $(\omega_1, \ldots, \omega_m)$ are rationally independent for any $I \in Q$. Now, suppose there exists an integer vector $(a_1, \ldots, a_m)$ such that 
	\[
	\langle (f_1(I), \ldots,f_{n-1}(I)), (a_1, \ldots, a_{n-1}) \rangle = 0\iff a_1f_1(I)+\cdots+a_{n-1}f_{n-1}(I)=0,\; (a_1, \ldots, a_{n-1})\in\mathbb Z^{n-1}
	\]
	then we obtain
	\[
	\langle (f'_1(I), \ldots, f'_{n-1}(I)), (a_1, \ldots, a_{n-1}) \rangle = 0, \quad \ldots, \quad \langle (f_1^{(n-2)}, \ldots, f_{n-1}^{(n-2)}), (a_1, \ldots, a_{n-1}) \rangle = 0.
	\]
So, we get the following system of equations:
\begin{align*}
	\left(\begin{array}{cccc}
		f_1(I)&f_2(I)&\cdots&f_{n-1}(I)\\
		f_1'(I)&f_2'(I)&\cdots&f_{n-1}'(I)\\
		\vdots&\vdots&\vdots&\vdots\\
		f_1^{(n-2)}(I)&f_2^{(n-2)}(I)&\cdots&f_{n-1}^{(n-2)}(I)
	\end{array}\right)\left(\begin{array}{c}
a_1\\
a_2\\
\vdots\\
a_{n-1}
	\end{array}\right)=0,
\end{align*}
by Condition 2, $(a_1,...,a_{n-1})=(0,...,0).$ Expanding $p$ into a Fourier series, we obtain the following expression:
\begin{align*}
	p(I,\theta_1,...,\theta_{n-1})=\sum_{j\in \mathbb Z^{n-1}}\hat p^j(I)e^{i\left<j,\theta\right>},\quad\theta=(\theta_1,...,\theta_{n-1})\in\mathbb T^{n-1}
\end{align*}
thus, by using \eqref{pz} and rationally independence of frequencies $(f_1,...,f_{n-1})$, we get 
\begin{align*}
	\sum_{j\in \mathbb Z^{n-1}\setminus\{0\}}i\left<(f_1,...,f_{n-1}),j\right>\hat p^j(I)e^{i\left<j,\theta\right>}=0\iff \hat p^j(I)=0,\quad \forall j\in\mathbb Z^{n-1}\setminus\{0\} 
\end{align*}
which means that $p=p(I).$
\end{remark}
\begin{remark}\label{R4}
By the discussion in Remark \ref{R3}, we can conclude that every first integral $\mathcal{I}$ of the divergence-free vector field $V$ that satisfies Conditions 1 and 2 must be locally of the form $\mathcal{I} = \mathcal{I}(I)$.
\end{remark}
		Next, let's consider a time-periodic, volume-preserving perturbation to the vector field $\eqref{Bao}$ that takes the following general form:
	\begin{align}\label{perturbation}
		\dot{I}&=\epsilon F_0(I,\theta_1,\theta_2,...,\theta_{n-1},t),\nonumber\\
		\dot{\theta}_1&=f_1(I)+\epsilon F_1(I,\theta_1,\theta_2,...,\theta_{n-1},t),\nonumber\\
		&\vdots\\
		\dot{\theta}_{n-1}&=f_{n-1}(I)+\epsilon F_{n-1}(I,\theta_1,\theta_2,...,\theta_{n-1},t),\nonumber
	\end{align}
	where $\epsilon$ is the (small) perturbation parameter, and the functions $F_i,i=0,...,n-1$, are analytic and periodic in $t$ with period $T=2\pi/\omega.$ We denote this perturbed volume-preserving map by $W$ and assume that $\epsilon=||V-W||_{C^\omega}$ under the $C^\omega$-distance (analytic-distance). We will derive an approximate form for an $n$-dimensional Poincare map of this system essentially using the approach from Wiggins \cite{Wiggins1990}. Using regular perturbation theory, the solutions of $\eqref{perturbation}$ are $O(\epsilon)$ close to the unperturbed solutions on time scales of $O(1).$ Hence we have the following expansions of the solutions of $\eqref{perturbation}:$
	\begin{align}\label{Iepsilon}
		I^{\epsilon}(t)&=I^0+\epsilon I^1(t)+O(\epsilon^2),\nonumber\\
		\theta_1^{\epsilon}(t)&=\theta^0_1+f_1(I^0)t+\epsilon\theta_1^1(t)+O(\epsilon^2),\nonumber\\
		&\vdots\\
		\theta_{n-1}^{\epsilon}(t)&=\theta^0_{n-1}+f_{n-1}(I^0)t+\epsilon\theta_{n-1}^1(t)+O(\epsilon^2),\nonumber
	\end{align}
	where $I^1(t),\theta_1^1(t),...,\theta_{n-1}^1(t)$ satisfy the following first variational equation:
	\begin{align}\label{I^1}
		\left(\begin{array}{c}
			\dot I^1\\
			\dot{\theta}_1^1\\
			\vdots\\
			\dot{\theta}_{n-1}^1
		\end{array}\right)&=\left(\begin{array}{cccc}
			0&0&\cdots&0\\
			\frac{\partial f_1}{\partial I}(I^0)&0&\cdots&0\\
			\vdots&\vdots&\vdots&\vdots\\
			\frac{\partial f_{n-1}}{\partial I}(I^0)&0&\cdots&0
		\end{array}\right)\left(\begin{array}{c}
			I^1\\
			\theta_1^1\\
			\vdots\\
			\theta_{n-1}^1
		\end{array}\right)\\
		&\quad+\left(\begin{array}{c}
			F_0(I^0,f_1(I^0)t+\theta_1^0,\cdots,f_{n-1}(I^0)t+\theta_{n-1}^0,t)\\
			F_1(I^0,f_1(I^0)t+\theta_1^0,\cdots,f_{n-1}(I^0)t+\theta_{n-1}^0,t)\\
			\vdots\\
			F_{n-1}(I^0,f_1(I^0)t+\theta_1^0,\cdots,f_{n-1}(I^0)t+\theta_{n-1}^0,t)
		\end{array}\right).\nonumber
	\end{align}
	Because our coordinates put the vector field in such a simple form, this equation can be easily solved. Our goal is construct an $n$-dimensional Poincare map. More precisely, we are interested in the construction of a map that takes the variables $I^{\epsilon},\theta_1^{\epsilon},...,\theta_{n-1}^{\epsilon}$ to their value after flowing along the solution trajectories of $\eqref{perturbation}$ for time $T.$ This map is simply given by 
	\begin{align}\label{Pepsilon}
		&P_{\epsilon}:(I^{\epsilon}(0),\theta_1^{\epsilon}(0),\cdots,\theta_{n-1}^{\epsilon}(0))\rightarrow (I^{\epsilon}(T),\theta_1^{\epsilon}(T),\cdots,\theta_{n-1}^{\epsilon}(T)),\nonumber\\
		&(I^0,\theta^0_1,\cdots,\theta_{n-1}^0)\rightarrow(I^0+\epsilon I^1,\theta_1^0+f_1(I^0)T+\epsilon\theta_1^1(T),\cdots,\theta_{n-1}^0+f_{n-1}(I^0)T\nonumber\\
		&\quad\quad\quad\quad\quad\quad\quad\quad\quad\quad+\epsilon\theta_{n-1}^1(T))+O(\epsilon^2),
	\end{align} 
	where we have used $\eqref{Iepsilon}$ and taken the following initial conditions:
	\begin{align*}
		I^{\epsilon}(0)&=I^0,\\
		\theta_1^{\epsilon}(0)&=\theta_1^0,\\
		&\vdots\\
		\theta_{n-1}^{\epsilon}(0)&=\theta_{n-1}^0.
	\end{align*}
	Now expressions for $I^1(T),\theta_1^1(T),...,\theta_{n-1}^1(T)$ can readily be obtained by solving $\eqref{I^1}:$
	\begin{align}\label{jinsi}
		I^1(T)&=\int_0^TF_0(I^0,f_1(I^0)t+\theta_1^0,\cdots,f_{n-1}(I^0)t+\theta_2^0,t)dt\equiv\tilde{F}_0(I^0,\theta_1^0,\cdots,\theta^0_{n-1}),\nonumber\\
		\theta_1^1(T)&=\frac{\partial f_1}{\partial I}|_{I=I^0}\int_0^T\int_0^tF_0(I^0,f_1(I^0)\xi+\theta_1^0,\cdots,f_{n-1}(I^0)\xi+\theta_{n-1}^0,\xi)d\xi dt\nonumber\\
		& +\int_0^TF_1(I^0,f_1(I^0)t+\theta_1^0,\cdots,f_{n-1}(I^0)t+\theta_{n-1}^0,t)dt\equiv\tilde{F}_1(I^0,\theta_1^0,\cdots,\theta_{n-1}^0),\\
		&\vdots\nonumber\\
		\theta_{n-1}^1(T)&=\frac{\partial f_{n-1}}{\partial I}|_{I=I^0}\theta_0^T\int_0^tF_0(I^0,f_1(I^0)\xi+\theta_1^0,\cdots,f_{n-1}(I^0)\xi+\theta_{n-1}^0,\xi)d\xi dt\nonumber\\
		& +\int_0^TF_{n-1}(I^0,f_1(I^0)t+\theta_1^0,\cdots,f_{n-1}(I^0)t+\theta_{n-1}^0,t)dt\equiv\tilde{F}_{n-1}(I^0,\theta_1^0,\cdots,\theta_{n-1}^0).\nonumber
	\end{align}
	Substituting these expressions into $\eqref{Pepsilon}$ and dropping the superscripts on the variables gives the following final form for the Poincare map:
	\begin{align}\label{42}
		I&\rightarrow I+\epsilon\tilde F_0(I,\theta_1,\cdots,\theta_{n-1})+O(\epsilon^2),\nonumber\\
		\theta_1&\rightarrow\theta_1+2\pi\frac{f_1(I)}{\omega}+\epsilon\tilde F_1(I,\theta_1,\cdots\theta_{n-1})+O(\epsilon^2),\nonumber\\
		&\vdots\\
		\theta_{n-1}&\rightarrow\theta_{n-1}+2\pi\frac{f_{n-1}(I)}{\omega}+\epsilon\tilde F_{n-1}(I,\theta_1,\cdots\theta_{n-1})+O(\epsilon^2),\nonumber
	\end{align}
	where we have used $T\equiv 2\pi/\omega.$
	
	This map is exactly in the form where the KAM-like theorems for perturbations of $n$-dimensional, volume-preserving maps can be applied. We can take any closed domain of $I$ to be the interval $[a,b]\subset(-\gamma,\gamma),a<b.$ 
	Without loss of generality, we assume that $2\pi\frac{f_1(I)}{\omega}=I$ and previous condition 2
	\begin{align}\label{NZZ}
		\left|\mathrm{det}\left(\begin{array}{cccc}
			f_1'(I)&f_2'(I)&\cdots&f_{n-1}'(I)\\
			f_1''(I)&f_2''(I)&\cdots&f_{n-1}''(I)\\
			\vdots&\vdots&\vdots&\vdots\\
			f_1^{(n-1)}(I)&f_2^{(n-1)}(I)&\cdots&f_{n-1}^{(n-1)}(I)
		\end{array}\right)\;\right|\geq\tau>0,\;\forall\;I\in[a,b]
	\end{align}
	 holds in the following complex domain $M:$
	$$\mathrm{Re}(r)\in[a-\delta,b+\delta],\quad |\mathrm{Im}(r)|\leq\delta$$
	for some $\delta>0.$ For $\epsilon=0$, the map obviously preserves every torus $\{c\}\times T^{n-1},c\in[a,b].$ For $\epsilon>0$ small enough, by the following  KAM theorem, we know that a large set of invariant tori still exists for the map. 
	\begin{theorem}\label{KAM2}
		There exists a positive number $\epsilon_0$ such that if $0<\epsilon\leq\epsilon_0,$ the map $\eqref{42}$ admits a family of invariant tori of the form:
		\begin{align}\label{KAM}
			I&=u_0(\xi_1,...,\xi_{n-1},\omega_1),\nonumber\\
			\theta_1&=\xi_1+u_1(\xi_1,...,\xi_{n-1},\omega_1),\nonumber\\
			&\vdots\\
			\theta_{n-1}&=\xi_{n-1}+u_{n-1}(\xi_1,...,\xi_{n-1},\omega_1)\nonumber
		\end{align}
		with $u_0,u_1,...,u_{n-1}$ real analytic functions of period $2\pi.$ Moreover, the map $\eqref{42}$ induced on the torus is given by:
		\begin{align}\label{baochi}
			\xi_1'&=\xi_1+\omega_1,\nonumber\\
			\xi_2'&=\xi_2+2\pi\frac{f_2(\omega_1)}{\omega}+(2\pi\frac{f_2(\omega_1)}{\omega})^*(\omega_1,\epsilon),\nonumber\\
			&\vdots\\
			\xi_{n-1}'&=\xi_{n-1}+2\pi\frac{f_{n-1}(\omega_1)}{\omega}+(2\pi\frac{f_{n-1}(\omega_1)}{\omega})^*(\omega_1,\epsilon),\nonumber
		\end{align}
		where $(2\pi\frac{f_i(\omega_1)}{\omega})^*(\omega_1,\epsilon),i=2,3,...,n-1,$ are functions depending on the perturbations $$\tilde F_i(I,\theta_1,...,\theta_{n-1})+O(\epsilon^2),\quad i=0,1,...,n-1,$$ and $(2\pi\frac{f_i(\omega_1)}{\omega})^*(\omega_1,0)=0.$
		
		In fact, there exists a Cantor set $S(\epsilon)\subset[a,b]$ depending on the functions $\tilde F_i(I,\phi_1,...,\phi_{n-1}),i=0,1,...,n-1,$ such that for each $\omega_1\in S(\epsilon),$ there is a corresponding invariant torus of the form $\eqref{KAM}$. Furthermore, the measure of the set $S(\epsilon)$ tends to $b-a$ as $\epsilon\rightarrow 0.$ 
	\end{theorem}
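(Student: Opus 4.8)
The plan is to read the Poincar\'e map \eqref{42} as an analytic, volume-preserving perturbation of the integrable map $(\theta,I)\mapsto(\theta+\alpha(I),I)$ on $\mathbb T^{n-1}\times[a,b]$, with frequency curve
\[
\alpha(I)=\Bigl(\,I,\ 2\pi\tfrac{f_2(I)}{\omega},\ \dots,\ 2\pi\tfrac{f_{n-1}(I)}{\omega}\,\Bigr),
\]
and then to invoke a KAM theorem for volume-preserving diffeomorphisms in the spirit of Cheng--Sun \cite{Cheng} and Xia \cite{Xia1992} (see also \cite{Herman,Li,Zhao5}). First I would verify the hypotheses that make such a theorem applicable. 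Since $P_\epsilon$ is the time-$T$ map of the volume-preserving flow \eqref{perturbation}, it is isotopic to the identity through volume-preserving maps, and hence enjoys the intersection property (every homotopically nontrivial torus that is a graph $I=u(\theta)$ meets its image); this is the non-symplectic substitute for energy conservation. Analyticity holds because the $\tilde F_i$ in \eqref{jinsi} are iterated integrals, along the unperturbed flow, of the analytic functions $F_i$, and therefore extend holomorphically to a strip $|\mathrm{Im}\,\theta|<\sigma$, $\mathrm{Re}\,I\in[a-\delta,b+\delta]$, $|\mathrm{Im}\,I|<\delta$, on which $P_\epsilon$ is $O(\epsilon)$-close to the integrable map; this pins down the analytic category in which the iteration runs.

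Next I would carry out the KAM (modified Newton) iteration in its ``one--twist'' form. The only available twist is the single nonvanishing slope $\partial_I\alpha_1\equiv1\neq0$; the remaining $n-2$ frequencies are slaved to this one parameter through $f_2,\dots,f_{n-1}$, which is precisely why the conjugated map \eqref{baochi} is a rigid rotation only in $\xi_1$ and carries the $\omega_1$-dependent corrections $(2\pi f_i(\omega_1)/\omega)^*$ in the remaining angles. Fixing $\omega_1$ in a Diophantine set, one looks for the torus as a graph $I=u_0(\xi)$, $\theta_j=\xi_j+u_j(\xi)$: the linearized equation in the $\xi_1$-direction is inverted for the mean of $u_0$ using the twist $\partial_I\alpha_1=1$, the remaining cohomological equations are solved by Fourier series once the corrected frequencies satisfy a Diophantine estimate, and the intersection property removes the last obstruction to closing the step without an extraneous translation. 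The customary geometric shrinking of strip widths absorbs the analyticity loss, and quadratic convergence yields, for each admissible $\omega_1$, a real-analytic torus of the form \eqref{KAM} on which the dynamics is \eqref{baochi}, with $(2\pi f_i(\omega_1)/\omega)^*=O(\epsilon)\to0$ as $\epsilon\to0$.

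The last and most delicate step is the measure estimate for $S(\epsilon)$, and this is where I expect the main obstacle. Because we have only a one-parameter curve $I\mapsto\alpha(I)$ inside $\mathbb R^{n-1}$, Kolmogorov nondegeneracy fails and one must use R\"ussmann-type nondegeneracy. Condition \eqref{NZZ} --- the nonvanishing of the Wronskian of $f_1',\dots,f_{n-1}'$ up to order $n-1$ --- says exactly that $\alpha'(I),\dots,\alpha^{(n-1)}(I)$ span $\mathbb R^{n-1}$ at every $I\in[a,b]$, which, arguing as in Remark~\ref{R3}, forbids any relation $\langle k,\alpha(I)\rangle\equiv\mathrm{const}$ with $0\neq k\in\mathbb Z^{n-1}$. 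Consequently, for every such $k$ the function $I\mapsto\langle k,\alpha(I)\rangle$ has a derivative of some order $\leq n-1$ bounded below by $c\tau|k|$ on $[a,b]$, and this lower bound survives for the $O(\epsilon)$-corrected frequency vector $\varrho(\omega_1)$; the standard sublevel-set lemma then bounds each resonant zone $\{\,\omega_1:\ |\langle k,\varrho(\omega_1)\rangle-2\pi\ell|<\kappa_\epsilon|k|^{-\nu}\,\}$ by a quantity $\lesssim(\kappa_\epsilon/\tau)^{1/(n-1)}|k|^{-(\nu+1)/(n-1)}$. Summing over the $O(|k|)$ relevant integers $\ell$ and then over $k\in\mathbb Z^{n-1}\setminus\{0\}$, the resulting series converges once $\nu$ is taken large enough (roughly $\nu>n(n-1)-1$), giving $\mathrm{meas}\bigl([a,b]\setminus S(\epsilon)\bigr)\lesssim\kappa_\epsilon^{1/(n-1)}$. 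Letting $\kappa_\epsilon\to0$ slowly as $\epsilon\to0$ (for instance $\kappa_\epsilon\sim\sqrt{\epsilon}$, which is still compatible with the smallness the iteration demands) yields $\mathrm{meas}(S(\epsilon))\to b-a$. The genuinely non-standard features --- where both \eqref{NZZ} and the volume-preserving (as opposed to symplectic) structure are simultaneously essential --- are the coupling of this degenerate measure estimate to the single free parameter $\omega_1$ and the reliance on the intersection property inside the iteration; the rest is a routine adaptation of the volume-preserving KAM machinery of \cite{Cheng,Xia1992,Herman,Li,Zhao5}.
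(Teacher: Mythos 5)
The paper does not actually prove this statement: its ``proof'' is the single line ``See Xia \cite{Xia1992}'', i.e.\ the theorem is imported wholesale from Xia's KAM theorem for analytic volume-preserving diffeomorphisms with one action variable. Your sketch reconstructs essentially that same argument (analytic one-action KAM iteration, Fourier solution of the cohomological equations at Diophantine frequencies, R\"ussmann-type nondegeneracy encoded by the Wronskian condition \eqref{NZZ}, and a sublevel-set measure estimate giving $\mathrm{meas}(S(\epsilon))\to b-a$), so in substance you are proving the cited black box rather than deviating from the paper's route; the identification of \eqref{NZZ} with the degenerate (R\"ussmann) nondegeneracy and the role of the single twist $\partial_I\alpha_1=1$ match how the cited theorem is meant to be applied.

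There is, however, one genuine gap in your argument: the justification of the intersection property. Being the time-$T$ map of a volume-preserving flow, or being isotopic to the identity through volume-preserving maps, does \emph{not} imply the intersection property for graph tori. For example, the perturbation $F_0\equiv c\neq 0$, $F_1=\dots=F_{n-1}=0$ is divergence-free, yet its Poincar\'e map is $I\mapsto I+\epsilon cT$ composed with a rotation, which is volume-preserving and isotopic to the identity but moves every graph torus off itself and admits no invariant tori at all. What is really needed is exactness (zero flux across the tori $\{I=\mathrm{const}\}$), equivalently the vanishing of the $\theta$-average of $\tilde F_0$ up to a coboundary; this is precisely a standing hypothesis in Xia's theorem, and without it the statement as you argue it is false. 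So either the exactness of the perturbing field must be added as a hypothesis and verified for the map \eqref{42} (computing the flux of $P_\epsilon$ through a graph torus and showing it vanishes), or the intersection property must be established by a genuine argument rather than by isotopy. The remainder of your outline (analyticity of the $\tilde F_i$, the one-parameter measure estimate with $\kappa_\epsilon\to0$, the conjugated dynamics \eqref{baochi} with $(2\pi f_i(\omega_1)/\omega)^*(\omega_1,0)=0$) is consistent with the cited proof.
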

	\begin{proof}
		See Xia \cite{Xia1992}.
	\end{proof}
		Thus, we can state the following KAM theorem
	\begin{theorem}\label{DKAM}
	Assume that the analytic divergence-free vector field \( V \) satisfies Assumptions 1–2 with \( Q = (-\gamma, \gamma) \). Then, for any time-periodic, volume-preserving analytic perturbation of \( V \) of order \( O(\epsilon) \), there exists a constant \( \epsilon_0 = \epsilon_0(V) > 0 \) such that for all \( \epsilon < \epsilon_0 \), there exists a Cantor set \( S(\epsilon) \subset (-\gamma, \gamma) \) with the following property:
	
	For each \( \omega_1 \in S(\epsilon) \), there exists a corresponding invariant torus on which the dynamics are governed by a vector field \( \bar{V} \) of the form
	\[
	\dot{\theta}_1 = f_1(\omega_1) + f_1^*(\omega_1, \epsilon), \quad \ldots, \quad \dot{\theta}_{n-1} = f_{n-1}(\omega_1) + f_{n-1}^*(\omega_1, \epsilon),
	\]
	where the functions \( f_i^*(\omega_1, \epsilon) \) satisfy \( f_i^*(\omega_1, 0) = 0 \) for all \( i = 1, \ldots, n-1 \), and depend on \( \omega_1 \) and \( \epsilon \).
		 Moreover, the measure of the set \( S(\epsilon) \) tends to \( 2\gamma \) as \( \epsilon \to 0 \).
	\end{theorem}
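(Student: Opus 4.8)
The plan is to obtain Theorem~\ref{DKAM} by suspending the discrete KAM statement of Theorem~\ref{KAM2}, using the Poincare-map reduction \eqref{perturbation}--\eqref{42} established above. A time-periodic, volume-preserving analytic perturbation $W$ of $V$ with $\epsilon=\|V-W\|_{C^\omega}$ is equivalently an autonomous analytic vector field on the extended phase space $\mathcal O\times(\mathbb R/T\mathbb Z)$ preserving $d\theta_1\wedge\cdots\wedge d\theta_{n-1}\wedge dI\wedge dt$, and its time-$T$ return map to a section $\{t=\mathrm{const}\}$ is exactly the $n$-dimensional map $P_\epsilon$ of \eqref{42}. The first step is therefore only to record that $P_\epsilon$ is an exact volume-preserving analytic map (the time-$T$ flow map of a volume-preserving flow is volume-preserving, and $\tilde F_0,\dots,\tilde F_{n-1}$ inherit analyticity and the vanishing-average conditions from $F_0,\dots,F_{n-1}$), lying within $O(\epsilon)$, in a fixed analytic norm on a complex neighborhood of the real domain, of the integrable fibered rotation $(I,\theta)\mapsto(I,\theta+2\pi f(I)/\omega)$.

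Second, I would fix a compact interval $[a,b]\subset(-\gamma,\gamma)$ and reduce $P_\epsilon$ to the normalized form used just before Theorem~\ref{KAM2}. Subdividing $[a,b]$ into finitely many subintervals and relabeling the angles, we may assume on each piece that $f_1'$ is bounded away from zero (possible by Condition~2, which forces $(f_1',\dots,f_{n-1}')\neq 0$ throughout $Q$ when $n\ge 3$); then $r=2\pi f_1(I)/\omega$ is a biholomorphism onto a complex strip $\{\mathrm{Re}\,r\in[a-\delta,b+\delta],\ |\mathrm{Im}\,r|\le\delta\}$, in which the first angle is advanced by $r$ and the remaining frequencies are analytic functions of $r$. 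One then checks that the non-degeneracy \eqref{NZZ} required by Theorem~\ref{KAM2} holds on this strip with a uniform lower bound; this is the step at which Condition~2 \eqref{NZ} is consumed, the point being that a Wronskian bounded away from zero on the real interval persists, after shrinking $\delta$, on a thin complex strip. Theorem~\ref{KAM2} then yields $\epsilon_0=\epsilon_0([a,b])>0$ and, for $0<\epsilon\le\epsilon_0$, a Cantor set $S_{[a,b]}(\epsilon)\subset[a,b]$ with $\operatorname{meas}(S_{[a,b]}(\epsilon))\to b-a$ as $\epsilon\to0$, together with the analytic invariant $(n-1)$-tori \eqref{KAM} of $P_\epsilon$ carrying the rotations \eqref{baochi}.

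Third, I would suspend each surviving torus: its orbit under the extended flow is the mapping torus of the rotation \eqref{baochi} of $\mathbb T^{n-1}$, hence an analytic invariant torus of $W$, on which the induced flow --- written in the angles $(\theta_1,\dots,\theta_{n-1})$ with $\omega_1=r$ --- takes the quasi-periodic form
\[
\dot\theta_1=f_1(\omega_1)+f_1^*(\omega_1,\epsilon),\qquad\ldots,\qquad\dot\theta_{n-1}=f_{n-1}(\omega_1)+f_{n-1}^*(\omega_1,\epsilon),
\]
where $f_i^*$ is the infinitesimal frequency shift corresponding to the correction $(2\pi f_i(\omega_1)/\omega)^*$ of \eqref{baochi}, so that $f_i^*(\omega_1,0)=0$ by the last clause of Theorem~\ref{KAM2}. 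To replace $[a,b]$ by $(-\gamma,\gamma)$ and obtain a single threshold $\epsilon_0(V)$, exhaust $(-\gamma,\gamma)$ by $[a_1,b_1]\subset[a_2,b_2]\subset\cdots$, shrink each $\epsilon_k:=\epsilon_0([a_k,b_k])$ so that the KAM defect $(b_k-a_k)-\operatorname{meas}(S_{[a_k,b_k]}(\epsilon))$ stays below $1/k$ for all $\epsilon\le\epsilon_k$, set $\epsilon_0(V):=\epsilon_1$, and define $S(\epsilon):=S_{[a_k,b_k]}(\epsilon)$ for $\epsilon\in(\epsilon_{k+1},\epsilon_k]$; then $\operatorname{meas}(S(\epsilon))\ge (b_k-a_k)-1/k$ on that range, so $\operatorname{meas}(S(\epsilon))\to2\gamma$ as $\epsilon\to0$.

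I expect the genuine difficulty to lie in the second step: confirming that the discrete non-degeneracy \eqref{NZZ}, on which Theorem~\ref{KAM2} rests, really does follow from the continuous Wronskian condition \eqref{NZ} after the reparametrization $r=2\pi f_1(I)/\omega$, and that the bound survives on a complex strip. Keeping track of which derivative rows remain after the column and row tied to the now-linear frequency $f_1$ are eliminated, and ensuring that the angle relabeling securing $f_1'\neq0$ is compatible with Condition~2 on each subinterval, is the delicate bookkeeping; by contrast, the exactness and volume preservation of $P_\epsilon$, the suspension to an invariant torus, and the exhaustion argument for the measure estimate are routine.
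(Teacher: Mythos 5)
Your proposal follows essentially the same route as the paper, which gives no separate proof of Theorem~\ref{DKAM} but presents it as a direct consequence of the Poincar\'e-map reduction \eqref{perturbation}--\eqref{42} together with Xia's Theorem~\ref{KAM2}; your suspension of the invariant tori of the map back to tori of the flow and your exhaustion argument for the measure statement simply make explicit steps the paper leaves implicit. The delicate point you flag --- justifying the normalized non-degeneracy \eqref{NZZ} on a complex strip starting from condition \eqref{NZ} --- is precisely the point the paper also passes over with its ``without loss of generality'' normalization, so your treatment is faithful to (and no less complete than) the paper's own argument.
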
 
	\section{A measure of non-integrability of divergence-free vector fields}
The primary objective of this section is to introduce a functional defined on the space of exact divergence-free vector fields, which quantifies the deviation of a given vector field \( V \) from being integrable and non-degenerate. Firstly, we begin by providing a rigorous definition of what it means for a vector field to be integrable.
\begin{definition}\label{D4}
	(1) If $V$ is a divergence-free vector field with an invariant domain $\mathcal O\cong\mathbb T^{n-1}\times(-\gamma,\gamma)$ covered by invariant tori of $V,$ and satisfies condition 1 in Section 3 for $I\in Q=(-\gamma,\gamma),$ we say that $V$ is $canonically\;integrable$ on $\mathcal O.$
	
	(2) A divergence-free vector field $V$ on $M$ is called Arnold integrable, if there is a closed subset $G\subset M,$ $\mathrm{meas}(G)=0,$ such that its complement $M\setminus G$ is a union of a finite or countable system of $V$-invariant domains $\mathcal O_j,$ where $V$ is canonically integrable. If the system of domains $\mathcal O_j$ is finite, $V$ is called well integrable.
	
	(3) An well integrable vector field $V$ is called non-degenerate if the system of domains $\mathcal O_j$ can be chosen in such a way that condition 2 in Section 3 holds for each $j.$
	
	(4) If the set $G$ introduced above is any Borel subset of $M$ and we do not require $Q_j$ to be the whole integral $(-\gamma_j,\gamma_j), $ then $V$ is called partially integrable on $M$ and Arnold integrable on $M$ outside $G$ with ``holes", corresponding to $\mathbb T^{n-1}\times ((-\gamma_j,\gamma_j)\setminus Q_j),j\geq 1.$
\end{definition}
We recall that an invariant torus $T^{n-1}$ of a vector field $V$ on $M$ is called $ergodic$ if the field is non-vanishing on the torus and some trajectory of $V$ is dense in $T^{n-1}.$ Clearly, an invariant torus $T^{n-1}\subset\mathcal O,T^{n-1}\cong\mathbb T^{n-1}\times\{I\},I\in Q,$ as in Definition \ref{D4} above, is ergodic if $(f_1(I),...f_{n-1}(I))$ is rational independent.

The minimal set $D$ for which $V$ is integrable on $M\setminus D$ measures the non-integrability of $V.$ To develop this idea we define the following functional:
\begin{definition}
	The partial integrability functional $\mathfrak{m}$ on the space of analytic  divergence-free vector fields
	\begin{align*}
		\mathfrak{m}: \mathrm{SVect}(M)\rightarrow[0,1]
	\end{align*} 
	assigns to a vector field $V\in SVect(M)$ the inner measure of the set equal to the union of all ergodic $V$-invariant $n-1$-diemnsional tori.
\end{definition}
\begin{remark}
	 Since the total measure of $M$ is normalized by 1, then $\mathfrak m\in[0,1].$
\end{remark}
The proposition below summarizes the main elementary properties of the partial integrability functional $\mathfrak{m}.$
\begin{proposition}\label{P4}
	The partial integrability functional $\mathfrak{m}$ satisfies the following properties:
	
	1. If $V$ is partially integrable and for some $j$ and $z\in Q_j$ which is a point of density for $Q_j$ the corresponding functions $f_1,...,f_{n-1}$ satisfy the twist condition $\eqref{NZ},$ then $\mathfrak{m}(V)>0.$
	
	2. If $V$ is Arnold integrable and non-degenerate, then $\mathfrak{m}(V)=1.$
	
	3. If $\Phi$ is a volume-preserving analytic diffeomorphism of $M,$ then $\mathfrak{m}(V)=\mathfrak{m}(\Phi^*V)$.
	
	4. If all the trajectories of $V$ in the complement of an invariant zero-measure subset of $M$ are periodic, then $\mathfrak{m}(V)=0.$ The same result holds if $V$ has $n-1$ first integrals which are independent almost everywhere on $M.$
	
	5. Let $\xi(t)$ be a trajectory of $V$ and denote its closure in $M$ by $\mathcal F.$ Then $\mathfrak{m}(V)\leq 1-\mathrm{meas}(\mathcal F)$.
\end{proposition}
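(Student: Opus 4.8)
The proof rests on one number-theoretic lemma, essentially the computation already carried out in Remark~\ref{R3}. If $f_1,\dots,f_{n-1}$ are analytic on an interval $J$ and the Wronskian determinant in \eqref{NZ} is nonzero at a single point of $J$, then the twist condition holds on an open subinterval and the resonant set $R:=\{I\in J:\ \langle k,(f_1(I),\dots,f_{n-1}(I))\rangle=0\ \text{for some}\ k\in\mathbb Z^{n-1}\setminus\{0\}\}$ is Lebesgue-null: for each fixed $k\neq0$ the analytic map $I\mapsto\langle k,(f_1(I),\dots,f_{n-1}(I))\rangle$ either has only isolated zeros or vanishes identically on a subinterval, and in the latter case differentiating it $n-2$ times forces $k$ into the kernel of the Wronskian matrix, contradicting its non-degeneracy; hence $R$ is a countable union of discrete sets, so null. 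Since a torus $\mathbb T^{n-1}\times\{I\}$ with $I\notin R$ has rationally independent frequency vector and nonvanishing field (Condition~1), it is ergodic, so inside any canonically integrable domain $\mathcal O\cong\mathbb T^{n-1}\times Q$ on which \eqref{NZ} holds the ergodic tori fill a measurable subset of full $\Omega$-measure $\mathrm{meas}(\mathcal O)$.

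Granting this, Property~2 is immediate: for $V$ Arnold integrable and non-degenerate, the ergodic tori cover $\bigcup_j\mathcal O_j=M\setminus G$ up to a null set, so $\mathfrak m(V)=\mathrm{meas}(M)=1$. For Property~1, pick a density point $z\in Q_j$ at which \eqref{NZ} holds; by analyticity \eqref{NZ} persists on an interval $(z-\delta,z+\delta)$, and the union of ergodic tori contains the measurable set $\mathbb T^{n-1}\times\big((z-\delta,z+\delta)\cap Q_j\setminus R\big)$, whose $\Omega$-measure is $(2\pi)^{n-1}\,\mathrm{meas}\big((z-\delta,z+\delta)\cap Q_j\big)>0$ because $z$ is a density point of $Q_j$; hence $\mathfrak m(V)>0$.

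Property~3 is a naturality statement. For a volume-preserving analytic diffeomorphism $\Phi$ the flow of $\Phi^*V$ is $\Phi^{-1}\circ\phi_t\circ\Phi$ with $\phi_t$ the flow of $V$, so $T\mapsto\Phi^{-1}(T)$ is a bijection between $V$-invariant and $\Phi^*V$-invariant $(n-1)$-tori which, being a homeomorphism, preserves non-vanishing of the field and density of orbits and hence carries ergodic tori to ergodic tori; as $\Phi$ preserves $\Omega$ it preserves $\Omega$-measure, hence inner measure, and $L_{\Phi^*V}\Omega=\Phi^*(L_V\Omega)=0$ shows $\Phi^*V$ is again analytic divergence-free, so $\mathfrak m(\Phi^*V)=\mathfrak m(V)$. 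For Property~4 (for $n\ge3$, so that $n-1\ge2$): if all orbits outside an invariant null set are periodic, the dense orbit of any ergodic torus $T$ is non-periodic (a circle is not dense in $\mathbb T^{n-1}$), hence lies in that null set, and — after checking that such a $T$ carries no periodic orbit at all — $T$ is disjoint from the full-measure set of periodic points, so $E:=\bigcup_{T}T$ is null and $\mathfrak m(V)=0$. If instead $V$ has $n-1$ first integrals independent off a null set $Z'$, each first integral is constant on the dense orbit of an ergodic torus $T$, hence on $T$, so the $n-1$ differentials all annihilate the $(n-1)$-dimensional space $T_xT$ and are therefore linearly dependent; thus $T\subseteq Z'$ and again $\mathfrak m(V)=0$.

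Finally, $\mathcal F=\overline{\xi(\mathbb R)}$ is closed and invariant, and $E$ is invariant, so the orbit of $\xi$ either lies entirely in $E$ — then it is confined to a single ergodic torus, $\mathcal F$ is contained in that torus, $\mathrm{meas}(\mathcal F)=0$, and the bound is vacuous — or is disjoint from $E$; in the latter case one shows $E$ meets $\mathcal F$ in an $\Omega$-null set (a point of $\mathcal F$ on an ergodic torus forces, by invariance, that torus to be either disjoint from $\xi(\mathbb R)$ or to swallow $\mathcal F$, the latter again making $\mathrm{meas}(\mathcal F)=0$), so that $\mathrm{meas}(K)+\mathrm{meas}(\mathcal F)=\mathrm{meas}(K\cup\mathcal F)\le\mathrm{meas}(M)=1$ for every compact $K\subseteq E$, whence $\mathfrak m(V)=\mathrm{meas}_*(E)\le1-\mathrm{meas}(\mathcal F)$. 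The main obstacle I anticipate is exactly the topology hidden in Properties~4 and~5 — the assertions that a non-vanishing flow on $\mathbb T^{n-1}$ with a dense orbit has no periodic orbit, and that the closure of a single positive-measure orbit cannot absorb a positive-measure family of ergodic tori — whereas the measure-theoretic bookkeeping is routine once the null-resonance lemma of the first paragraph is available.
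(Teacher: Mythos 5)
Note first that the paper does not actually prove Proposition~\ref{P4}: its ``proof'' is the citation ``See \cite{Khesin}'', so your attempt has to be judged as a self-contained argument. Items 1--3 and the first-integral half of item 4 are essentially correct and are the expected arguments: your null-resonance lemma is the same computation as Remark~\ref{R3} (one small caveat for item 1: in the partially integrable case $Q_j$ need not be open, so before invoking analyticity you should note that the $f_i$ extend analytically from $Q_j$ to the ambient interval, e.g.\ by freezing the angle variables in the analytic coordinate expression of $V$ on $\mathcal O_j$, and read \eqref{NZ} for that extension); the conjugation argument for item 3 and the annihilator/linear-dependence argument for the first-integral case (implicitly $n\ge 3$) are fine.

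The genuine gaps are exactly where you flag them, and flagging them does not close them: the first claim of item 4 and item 5. For item 4 your route reduces everything to the assertion that an ergodic torus carries no periodic orbit, i.e.\ that a non-vanishing flow on $\mathbb T^{n-1}$ with a dense orbit has no periodic orbit. For $n-1=2$ this is true but needs an argument (a periodic orbit on $\mathbb T^{2}$ is non-contractible by an index argument; cutting along it gives an annulus with invariant boundary, and capping the boundary circles with disks produces a flow on $S^{2}$ with a nontrivially recurrent orbit, contradicting Poincar\'e--Bendixson). For $n-1\ge 3$ the assertion is simply false in general: non-vanishing transitive flows on $\mathbb T^{3}$ with periodic orbits exist (e.g.\ suspensions of transitive diffeomorphisms of $\mathbb T^{2}$ isotopic to the identity having fixed points), so a torus-by-torus argument cannot prove item 4 in the stated generality; one must exploit the hypothesis that almost every point of $M$ is periodic in a more global way. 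For item 5, the whole content of the inequality is the claim that, when $\xi$ does not lie on an ergodic torus, $\mathcal F$ meets the union $E$ of ergodic tori in an $\Omega$-null set; your parenthetical dichotomy only yields that each ergodic torus is disjoint from the orbit $\xi(\mathbb R)$, which is useless because the orbit itself is always a null set, and it does not rule out the real danger: a torus $T$ whose dense orbit misses $\mathcal F$ can still intersect $\mathcal F$ in a proper closed invariant subset, and the union of such intersections over an uncountable, positive-measure family of tori is not obviously null (the correct dichotomy is: if $\mathcal F$ meets the dense orbit of $T$ then $T\subset\mathcal F$; otherwise $T\cap\mathcal F$ is a proper closed invariant subset of $T$, and its total contribution must still be estimated). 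So the crucial dynamical content of items 4 and 5 --- precisely the part for which the paper defers to \cite{Khesin} --- is missing from your proposal.
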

\begin{proof}
	See \cite{Khesin}.
\end{proof}
The following theorem establishes that the functional $\mathfrak{m}(V)$ is continuous at $V$ if the vector field is well integrable and non-degenerate.
\begin{theorem}\label{T4}
	Let $V\in \mathrm{SVect}(M)$ be an analytic well integrable non-degenerate vector field. Then the functional $\mathfrak{m}$ is continuous at $V$ in the $C^\omega$-topology (analytic-topology).
\end{theorem}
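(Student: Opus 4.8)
The plan is to reduce the statement to the estimate: for every $\delta>0$ there is $\epsilon_0>0$ such that $W\in\mathrm{SVect}(M)$ with $\|V-W\|_{C^\omega}<\epsilon_0$ implies $\mathfrak m(W)>1-\delta$. Since $V$ is Arnold integrable and non-degenerate, Proposition~\ref{P4}(2) gives $\mathfrak m(V)=1$, and $\mathfrak m\le 1$ everywhere, so this estimate is exactly continuity of $\mathfrak m$ at $V$ in the $C^\omega$-topology. Thus the whole task is to produce, in a perturbed field $W$, a union of ergodic invariant $(n-1)$-tori whose total measure is close to $1$.

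Next I would exhaust $M$ by the integrable charts. Since $V$ is \emph{well} integrable, Definition~\ref{D4} provides finitely many $V$-invariant domains $\mathcal O_1,\dots,\mathcal O_N$ with $\mathrm{meas}\bigl(M\setminus\bigcup_{j}\mathcal O_j\bigr)=0$, and on each $\mathcal O_j$ coordinates $(\theta,I)\in\mathbb T^{n-1}\times(-\gamma_j,\gamma_j)$ in which $V$ takes the form \eqref{Bao} and Condition~2 (the Wronskian inequality \eqref{NZ}) holds uniformly. By Remark~\ref{R3} the volume form there reads $\Omega|_{\mathcal O_j}=p_j(I)\,d\theta_1\wedge\cdots\wedge d\theta_{n-1}\wedge dI$, and the substitution $I\mapsto\int_0^I p_j$ is volume-preserving, normalizes $\Omega|_{\mathcal O_j}$ to the standard form, and preserves \eqref{Bao}--\eqref{NZ} (the new Wronskian differs from the old one by a nonvanishing factor) as well as the value of $\mathfrak m$ by Proposition~\ref{P4}(3). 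Given $\delta$, I then fix compact intervals $[a_j,b_j]$ with closure in $(-\gamma_j,\gamma_j)$ so that the boxes $K_j:=\mathbb T^{n-1}\times[a_j,b_j]$ satisfy $\sum_{j=1}^N\mathrm{meas}(K_j)>1-\tfrac\delta2$; this is possible because the $\mathcal O_j$ already exhaust $M$ up to a null set.

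Then I would apply the KAM persistence result chart by chart. For $W\in\mathrm{SVect}(M)$ put $\epsilon:=\|V-W\|_{C^\omega}$; since $V$ and $W$ are divergence-free for the same $\Omega$, the restriction $W|_{\mathcal O_j}$ is an analytic volume-preserving perturbation of $V|_{\mathcal O_j}$ of order $O(\epsilon)$ (autonomous, hence in particular time-periodic). By Theorem~\ref{DKAM} (equivalently Theorem~\ref{KAM2} applied to a Poincar\'e section) there is $\epsilon_{0,j}>0$ so that for $\epsilon<\epsilon_{0,j}$ there is a Cantor set $S_j(\epsilon)\subset[a_j,b_j]$ with $\mathrm{meas}(S_j(\epsilon))\to b_j-a_j$ as $\epsilon\to0$, and for each $\omega_1\in S_j(\epsilon)$ an analytic $W$-invariant $(n-1)$-torus $T_{\omega_1}\subset\mathcal O_j$, a graph $I=\omega_1+O(\epsilon)$ over $\mathbb T^{n-1}$, carrying a quasi-periodic flow with Diophantine hence rationally independent frequency; thus each $T_{\omega_1}$ is ergodic in the sense recalled after Definition~\ref{D4}. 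The union $\mathcal T_j(\epsilon):=\bigcup_{\omega_1\in S_j(\epsilon)}T_{\omega_1}$ is the image of $\mathbb T^{n-1}\times S_j(\epsilon)$ under the near-identity KAM parametrization, which is Lipschitz, injective, with Jacobian $1+O(\epsilon)$; so $\mathcal T_j(\epsilon)$ is compact and $\mathrm{meas}(\mathcal T_j(\epsilon))\ge(1-C_j\epsilon)(2\pi)^{n-1}\mathrm{meas}(S_j(\epsilon))\to(2\pi)^{n-1}(b_j-a_j)=\mathrm{meas}(K_j)$. The $\mathcal T_j(\epsilon)$ lie in the pairwise disjoint $\mathcal O_j$ and consist of ergodic $W$-invariant $(n-1)$-tori, so $\mathfrak m(W)\ge\sum_{j=1}^N\mathrm{meas}(\mathcal T_j(\epsilon))$; as $N<\infty$, the right side tends to $\sum_{j=1}^N\mathrm{meas}(K_j)>1-\tfrac\delta2$ as $\epsilon\to0$, so some $\epsilon_0\le\min_j\epsilon_{0,j}$ forces $\mathfrak m(W)>1-\delta$ for $\|V-W\|_{C^\omega}<\epsilon_0$, which together with $\mathfrak m(W)\le1=\mathfrak m(V)$ finishes the proof.

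I expect the main obstacle to be exactly the measure bookkeeping of the last step: turning the one-dimensional KAM output $\mathrm{meas}(S_j(\epsilon))\to b_j-a_j$ into the phase-space statement $\mathrm{meas}(\mathcal T_j(\epsilon))\to\mathrm{meas}(K_j)$, uniformly over the finitely many charts. This forces one to verify carefully that the surviving tori are honest Lipschitz graphs with Lipschitz dependence on the parameter $\omega_1$ (so that their union is measurable with the expected measure and distinct tori do not overlap), to keep the $O(\epsilon)$-distortions of the KAM parametrization under control, and to pick a single threshold dominating all $\epsilon_{0,j}$ --- the last point being available only because $V$ is \emph{well} integrable, i.e.\ $N$ is finite. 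The verification that $W$ still satisfies the hypotheses of Theorem~\ref{DKAM} after normalizing the volume form, and that the KAM frequencies are rationally independent (hence the tori ergodic), are comparatively routine.
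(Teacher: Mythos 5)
Your proposal is correct and follows essentially the same route as the paper: exhaust $M$ by the finitely many integrable charts provided by well integrability, apply the divergence-free KAM theorem (Theorem~\ref{DKAM}) on each chart to obtain a Cantor family of ergodic invariant tori whose measure fills the chart as $\epsilon\to 0$, and sum over the finite collection to conclude $\mathfrak m(W)\to 1=\mathfrak m(V)$. Your version merely adds more explicit bookkeeping (compact sub-intervals, the volume-form normalization via Remark~\ref{R3}, and control of the KAM parametrization) than the paper's sketchier computation $\mathfrak m^{(j)}(W)=\mathrm{meas}(\mathcal O_j(\delta_j))\,\mathrm{meas}(S_j(\epsilon))/(2\gamma_j)$.
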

\begin{proof}
	Let $V$ be an Arnold integrable non-degenerate vector field and $W\in \mathrm{SVect}(M)$ an analytic exact divergence-free vector field that is closed to $V.$ By definition we have that $\mathfrak{m}(V)=1.$ Consider a domain $\mathcal O_j$ as in Definition \ref{D4} and set $\epsilon:=||V-W||_{C^\omega}.$ We now take a subset $\mathcal O_j(\delta_j)$ for any small $\delta_j >0$ such that
	
	$\bullet$ $\mathcal O_j(\delta_j)\cong (-\gamma_j,\gamma_j)\times \mathbb T^{n-1}.$

$\bullet$ $V$ is canonically integrable in $\mathcal O_j(\delta_j).$

$\bullet$ $V$ satisfies the uniform twist condition 2 of Section 3 with $\tau_j=\delta_j$, i.e.
	\begin{align*}
	\left|\mathrm{det}\left(\begin{array}{cccc}
		f_1'(I)&f_2'(I)&\cdots&f_{n-1}'(I)\\
		f_1''(I)&f_2''(I)&\cdots&f_{n-1}''(I)\\
		\vdots&\vdots&\vdots&\vdots\\
		f_1^{(n-1)}(I)&f_2^{(n-1)}(I)&\cdots&f_{n-1}^{(n-1)}(I)
	\end{array}\right)\;\right|\geq\delta_j>0.
\end{align*}
Then the KAM Theorem \ref{DKAM} implies that the contribution $\mathfrak{m}^{(j)}(W)$ to $\mathfrak{m}(W),$ coming from a set $\mathcal O_j(\delta_j),$ is
\begin{align}
	\mathfrak{m}^{(j)}(W)=\mathrm{meas}(\mathcal O_j(\delta_j))\frac{\mathrm{meas}(S_j(\epsilon))}{2\gamma_j},
	\end{align}
	where $S_j(\epsilon)$ is described in Theorem $\ref{DKAM}$. Thus we know that 
	$$\mathfrak{m}^{(j)}(W)\rightarrow \mathrm{meas}(\mathcal O_j(\delta_j)),\quad as\;\epsilon\rightarrow 0.$$
	Since $V$ is well integrable, we know that the index $j$ belongs to a finite index set. Without loss of generality,  suppose $j\in\{1,...,m\}.$ Thus 
	$$\mathfrak{m}(W)=\sum_{j=1}^m\mathfrak{m}^{(j)}(W)=\sum_{j=1}^m\mathrm{meas}(\mathcal O_j(\delta_j))\frac{\mathrm{meas}(S_j(\epsilon))}{2\gamma_j}.$$
	Moreover, by $\sum_{j=1}^m\mathrm{meas}(\mathcal O_j(\delta_j))=1,$ we know that 
	$$1\geq\mathfrak{m}(W)\rightarrow 1\quad as \rightarrow 0.$$
	That is, $\mathfrak{m}$ is continuous at $V.$
\end{proof}
\begin{remark}
	Let $V\in \mathrm{SVect}(M)$ be an analytic Arnold integrable non-degenerate vector field and there is a constant $c$ such that $\delta_j>c,\forall j$. Then by a similar proof of Theorem \ref{T4}, we can also get that the functional $\mathfrak{m}$ is continuous at $V$ in the $C^\omega$-topology (analytic-topology).
\end{remark}
We can observe that on any closed $n$-manifold $M$ there is a nontrivial vector field $W\in \mathrm{SVect}(M)$ such that $\mathfrak{m}(W)=0.$ Indeed, take $n-1$ analytic functions $(g_1,...,g_{n-1}):M\rightarrow\mathbb R^{n-1}$  which are independent almost everywhere in $M,$ i.e., rank$(dg_1(x),...,dg_{n-1}(x))=n-1$ for all $x\in M$ except for a zero-measure set. Of course such functions exist on any analytic manifold $M$. Now we define $W$ as the unique vector field such that
$$i_W\mu=dg_1\wedge\cdots\wedge dg_{n-1}.$$
It is obvious that $W$ is divergence-free and exact because $i_W\mu=d(g_1dg_2\wedge\cdot\wedge dg_{n-1}).$ Moreover, we can see that $g_1,...,g_{n-1}$ are first integrals of $W$ since 
\begin{align*}
	0=i_Wi_W\mu==\sum_{i=1}^{n-1}(-1)^{i+1}W(g_i)dg_1\wedge\cdot dg_{i-1}\wedge \widehat{dg_i}\wedge dg_{i+1}\wedge\cdots\wedge dg_{n-1},
\end{align*}
which means that $W(g_i)=0,i=1,...,n-1$ by the independence of $g_i,i=1,...,n-1.$ It then follows from Proposition \ref{P4} that $\mathfrak{m}(W)=0.$

We can also construct a nontrivial vector field \( W \in SVect(M) \) such that \( \mathfrak{m}(W) = 0 \), by making use of its symmetries. Recall that a symmetry of a given vector field \( V \) is another vector field \( \bar{V} \) satisfying \( [V, \bar{V}] = 0 \). 
Suppose the vector field \( W \) admits two symmetries \( W_1 \) and \( W_2 \) such that 
\[
[W_1, W_2] = h_1 W_1 + h_2 W_2,
\]
for some smooth functions \( h_1, h_2 \in C^\infty(M) \), and assume that the vector fields \( W, W_1, W_2 \) are linearly independent everywhere on \( M \). Then, by Frobenius' theorem, the distribution spanned by \( \{W, W_1, W_2\} \) is integrable and defines a foliation of dimension 3. It follows that \( \mathfrak{m}(W) = 0 \).

Moreover, define the bivector field \( \pi = W_1 \wedge W_2 \). We claim that \( \pi \) defines a Poisson structure on \( M \). Indeed, by the properties of the Schouten–Nijenhuis bracket \( [\cdot,\cdot]_S \), we have
\[
[\pi, \pi]_S = 2 W_1 \wedge W_2 \wedge [W_1, W_2] = 2 W_1 \wedge W_2 \wedge (h_1 W_1 + h_2 W_2) = 0.
\]
Therefore, \( \pi \) is a Poisson bivector field, and \( (M, \pi) \) is a Poisson manifold.

We can also verify that \( W \) is a Poisson vector field, i.e.,
\[
L_W \pi = [W, W_1] \wedge W_2 + W_1 \wedge [W, W_2] = 0.
\]
Furthermore, \( W \) will be a Hamiltonian vector field if the first Lichnerowicz–Poisson cohomology group (\cite{Lichnerowicz}) of the Poisson manifold \( (M, \pi = W_1 \wedge W_2) \) is trivial, i.e., \( H^1_\pi(M) = 0 \). 
For more details on the relationship between divergence-free and Hamiltonian vector fields, see \cite{Haller1998,Mezic1994,Zhao2,Zhao3,Zhao4}.
 \section{Non-persistence of $n-1$-invariant manifold for divergence-free vector fields}
 In Sections 3 and 4, we discussed the case where the divergence-free vector field is locally a function of $n-1$ action variables and angle variables. In this section, we consider the case where the divergence-free vector field is locally a function of $n-1$ action variables and one angle variable.
 
 Let $V$ be a smooth divergence-free vector field on a closed $n$-dimensional manifold $M$ endowed with a volume form $\Omega,$ and let $T\subset M$ be an invariant $1$-torus of $V.$ We assume that in a neighborhood $\mathcal O$ of the torus $T,$ one can construct coordinates $(\theta,I_1,...,I_{n-1}),$ where $\theta\in\mathbb T=\mathbb R/2\pi\mathbb Z$ and $I_j\in(-\gamma_j,\gamma_j)(\gamma_j>0),$ such that $\Omega|_{\mathcal O}= dI_1\wedge\cdots\wedge dI_{n-1}\wedge d\theta$ and $T=\{I_1=0,...,I_{n-1}=0\}.$ It is clear that the vector field $V$ we consider here is not Arnold integrable or well integrable and $\mathfrak{m}(V)<1.$ Further, if  there is a closed subset $G\subset M,$ meas$(G)=0,$ such that its complement $M\setminus G$ is a union of a finite or countable system of $V$-invariant domains $\mathcal O_j,$ on which $V$ has the property above, then $\mathfrak{m}(V)=0.$
 
  Now, we present general results on the non-persistence of $(n-1)$-dimensional invariant manifolds for $n-1$-action-$1$-angle systems, for which we assume that there is a function $f(I_1,...,I_{n-1})$ defined on $Q:=(-\gamma_1,\gamma_1)\times\cdots\times(-\gamma_{n-1},\gamma_{n-1}),$ satisfying that
 for each $I=(I_1,...,I_{n-1})\in Q$, the torus $\mathbb T^{n-1}\times\{I\}$ is invariant for the vector field $V,$ which assumes the form (for this value of $I$):
 \begin{align}\label{Bao2}
 	\dot \theta=f(I),\quad \quad\dot I_1=0,\cdots,\dot I_{n-1}=0,
 \end{align}
 where $f(I) $ is a analytic function.

 Consider a manifold defined by the level set
\[
F(I_1, \ldots, I_{n-1}, \theta) = c,
\]
where $F$ is an analytic function, and consider the integrable system of the form
\begin{equation}\label{Bao3}
	\begin{aligned}
	\dot \theta = f(I_1, \ldots, I_{n-1})\quad	\dot I_1 = 0, \cdots,
		\dot I_{n-1} = 0
		.
	\end{aligned}
\end{equation}
We compute the total derivative of $F$ along the flow:
\[
\frac{dF}{dt} = \frac{\partial F}{\partial \theta} f(I_1, \ldots, I_{n-1}).
\]
Hence, if the level set $F(I_1, \ldots, I_{n-1}, \theta) = c$ defines an invariant manifold for \eqref{Bao2} (i.e., $\frac{dF}{dt} = 0$), then two cases may occur:
\begin{itemize}
	\item If $\frac{\partial F}{\partial \theta} = 0$, the manifold is of the product form $V \times S^1$;
	\item If $f(I_1, \ldots, I_{n-1}) = 0$, then the manifold is of resonant type, i.e.,
	\[
	F(I_1, \ldots, I_{n-1}, \theta) = c, \quad f(I) = 0.
	\]
\end{itemize}

Now consider a perturbation of the integrable system \eqref{Bao3} of the form
\begin{equation}\label{flowperturbation}
	\begin{aligned}
		\dot I_1 &= \epsilon\, g_1(I_1, \ldots, I_{n-1}, \theta, \epsilon), \\
		&\vdots \\
		\dot I_{n-1} &= \epsilon\, g_{n-1}(I_1, \ldots, I_{n-1}, \theta, \epsilon), \\
		\dot \theta &= f(I_1, \ldots, I_{n-1}) + \epsilon\, g_n(I_1, \ldots, I_{n-1}, \theta, \epsilon).
	\end{aligned}
\end{equation}
\begin{remark}
	The perturbed vector field $k=\epsilon g_1\frac{\partial}{\partial I_1}+\cdots+\epsilon g_{n-1}\frac{\partial}{\partial I_{n-1}}+\epsilon g_n\frac{\partial}{\partial \theta}$ assumed here is not necessarily Hamiltonian with respect some Poisson structure $\Pi$ or volume-preserving, i.e., there is a function $K$ such that
	$$k=\Pi dK$$
	or
	$$\frac{\partial g_1}{\partial I_1}+\cdots+\frac{\partial g_{n-1}}{\partial I_{n-1}}+\frac{\partial g_{n}}{\partial\theta}=0.$$
\end{remark}
 \begin{theorem}\label{none}
 	Let $F^0(I_1,...,I_{n-1})=0$ define an invariant manifold ${\cal M}$ of the system $\eqref{Bao2}$ of the type $V\times S^1.$ Let the Fourier expansions of $g_1|_{\epsilon=0},...,g_{n-1}|_{\epsilon=0}$ be given by
 	\begin{align}\label{flowfourier}
 		g_1(I_1,...,I_{n-1},\theta,0)&=\sum_{j\in \mathbb Z}\hat g_1^j(I_1,...,I_{n-1})e^{i\left<j,\theta\right>},\nonumber\\
 		&\vdots\\
 		g_{n-1}(I_1,...,I_{n-1},\theta,0)&=\sum_{j\in \mathbb Z}\hat g_n^j(I_1,...,I_{n-1})e^{i\left<j,\theta\right>}.\nonumber
 	\end{align}
 	Assume that on $\mathcal M$
 	
 	1. $(\hat g_1^0,...,\hat g_{n-1}^0)\cdot\nabla F^0\neq 0,$ or
 	
 	2. $(\hat g_1^0,...,\hat g_{n-1}^0)\cdot\nabla F^0= 0$ and there is $(I_1^*,...,I_{n-1}^*,\theta^*)\in\mathcal M$ such that $f(I_1^*,...,I_{n-1}^*)=0.$ Moreover, suppose that for every $j\in \mathbb Z$ there is $k=Nj,N\in \mathbb Z$ such that $(\hat g_1^k,...,\hat g_{n-1}^k)\cdot\nabla F^0\neq 0,\forall (I_1,...,I_{n-1},\theta)\in\mathcal M.$
 	
 	\noindent Then there are no analytic manifolds ${\cal M}^{\epsilon}$ defined by $F(\epsilon,I_1,...,I_{n-1},\theta)=0$ such that 
 	\begin{align}\label{jieg}
 			\mathrm{lim}_{\epsilon\rightarrow 0}F(\epsilon,I_1,...,I_n,\theta)=F(0,I_1,...,I_{n-1},\theta)=F^0(I_1,...,I_n)\iff \mathrm{lim}_{\epsilon\rightarrow 0}\mathcal M^\epsilon=\mathcal M.
 	\end{align}
 \end{theorem}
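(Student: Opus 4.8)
The plan is to argue by contradiction, reducing the persistence question to a single homological equation obtained at first order in $\epsilon$ and then reading it off Fourier mode by Fourier mode in the angle $\theta$.

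First I would suppose that such an analytic family $\{\mathcal M^\epsilon\}$ exists, with defining functions $F(\epsilon,\cdot)$ analytic in all arguments and $F(\epsilon,\cdot)\to F^0$ as $\epsilon\to 0$. Since $0$ is a regular value of $F^0$ (so that $\mathcal M$ is a genuine $(n-1)$-dimensional hypersurface), the differential $dF(\epsilon,\cdot)$ remains nonzero along $\{F(\epsilon,\cdot)=0\}$ for $\epsilon$ small; hence invariance of $\mathcal M^\epsilon$ under \eqref{flowperturbation} means that $\tfrac{dF}{dt}$ vanishes on $\{F=0\}$, and therefore (locally, by straightening $F$ via the implicit function theorem)
$$\frac{dF}{dt}=\mu(\epsilon,I,\theta)\,F(\epsilon,I,\theta)$$
for some analytic $\mu$. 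Next I would expand $F=F^0+\epsilon F^1+O(\epsilon^2)$ and $\mu=\mu^0+\epsilon\mu^1+O(\epsilon^2)$, substitute into $\tfrac{dF}{dt}=\sum_{k=1}^{n-1}\partial_{I_k}F\cdot\epsilon g_k+\partial_\theta F\cdot(f+\epsilon g_n)$, and match powers of $\epsilon$. At order $\epsilon^0$ one gets $(\partial_\theta F^0)f=\mu^0F^0$; since $\partial_\theta F^0=0$ (the manifold $\mathcal M$ is of type $V\times S^1$) and $F^0\not\equiv 0$, this forces $\mu^0\equiv 0$. At order $\epsilon^1$ one obtains $\nabla F^0\cdot g(\cdot,0)+f\,\partial_\theta F^1=\mu^1F^0$ with $g=(g_1,\dots,g_{n-1})$, and restricting to $\mathcal M=\{F^0=0\}$ yields the homological equation
$$\nabla F^0(I)\cdot g(I,\theta,0)+f(I)\,\partial_\theta F^1(I,\theta)=0\qquad\text{on }\mathcal M.$$

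Then, for fixed $I$ with $F^0(I)=0$, I would expand both $\theta$-periodic terms above in Fourier series, using \eqref{flowfourier}; equating the coefficient of $e^{ij\theta}$ gives
$$\nabla F^0(I)\cdot\hat g^j(I)+ij\,f(I)\,\hat F_1^j(I)=0,\qquad j\in\mathbb Z,$$
for every $I\in\{F^0=0\}$, where $\hat g^j=(\hat g_1^j,\dots,\hat g_{n-1}^j)$. For $j=0$ this reads $\nabla F^0\cdot\hat g^0=0$ on $\mathcal M$, which contradicts assumption~1 outright, so under assumption~1 no such family exists. Under assumption~2 the $j=0$ relation is consistent, so instead I would evaluate the mode-$j$ relation at the resonant point $(I^*,\theta^*)\in\mathcal M$ with $f(I^*)=0$: the term $ij\,f(I^*)\hat F_1^j(I^*)$ vanishes for every $j$, leaving $\nabla F^0(I^*)\cdot\hat g^j(I^*)=0$ for all $j\in\mathbb Z$; taking $j=1$ and the corresponding multiple $k=N$ supplied by assumption~2 gives $\nabla F^0(I^*)\cdot\hat g^k(I^*)\neq 0$, the desired contradiction. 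Thus in either case no analytic $\mathcal M^\epsilon$ satisfying \eqref{jieg} exists.

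The main obstacle I anticipate is the passage from invariance of $\mathcal M^\epsilon$ to the divisibility relation $\dot F=\mu F$ with $\mu$ analytic, and the legitimacy of the subsequent $\epsilon$-expansion of $\mu$: this relies on $F^0$ and each $F(\epsilon,\cdot)$ being a regular defining function, and on analytic (or at least $C^1$) dependence of $F$ on $\epsilon$, equivalently convergence of $\epsilon^{-1}(F(\epsilon,\cdot)-F^0)$ as $\epsilon\to 0$. A secondary subtlety, specific to case~2, is that $f$ may vanish only at isolated points of $\{F^0=0\}$: rather than trying to solve the homological equation for $\hat F_1^j$ globally — which would force $\nabla F^0\cdot\hat g^j$ to vanish along $\{F^0=f=0\}$ to the order of vanishing of $f$ merely to keep $\hat F_1^j$ analytic — it is cleanest to evaluate at the single point $I^*$, where all the unknown coefficients $\hat F_1^j$ drop out simultaneously and the obstruction becomes purely a condition on the $\hat g^j$.
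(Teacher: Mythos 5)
Your proposal is correct and follows essentially the same route as the paper: argue by contradiction, extract the first-order (in $\epsilon$) homological equation $\nabla F^0\cdot g(\cdot,\cdot,0)+f\,\partial_\theta F_1=0$ on the unperturbed manifold, expand in Fourier modes in $\theta$, and derive the contradiction from the zeroth mode under assumption~1 or by evaluating all modes at the resonant point $(I^*,\theta^*)$ under assumption~2. Your use of the multiplier form $\dot F=\mu F$ and the observation $\mu^0\equiv 0$ is merely a cleaner justification of the restriction to the level set that the paper performs informally (working on $\mathcal M^\epsilon$ and passing to the limit), not a different argument.
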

 \begin{proof}
 	Assume that $F(\epsilon,I_1,...,I_{n-1},\theta)=0$ is invariant under $\eqref{flowperturbation}$ and satisfies the property $\eqref{jieg}$. Then for any $(I_1,...,I_{n-1},\theta)\in\mathcal M^\epsilon$,
 	\begin{align}\label{Fep}
 		\frac{d}{dt}F(\epsilon,I_1,...,I_{n-1},\theta)&=\frac{\partial F}{\partial I_1}\cdot\dot I_1+\cdots+\frac{\partial F}{\partial I_{n-1}}\cdot\dot I_{n-1}+\frac{\partial F}{\partial \theta}\cdot\dot \theta\nonumber\\
 		&=\epsilon\frac{\partial F}{\partial I_1}\cdot g_1(I_1,...,I_{n-1},\theta,\epsilon)+\cdots+\epsilon\frac{\partial F}{\partial I_{n-1}}\cdot g_{n-1}(I_1,...,I_{n-1},\theta,\epsilon)\nonumber\\
 		&\quad+\frac{\partial F}{\partial \theta}\cdot(f(I_1,...,I_{n-1})+\epsilon g_{n}(I_1,...,I_{n-1},\theta,\epsilon))\nonumber\\
 		&=0.
 	\end{align}
 In addition,  we expand $F(\epsilon,I_1,...,I_{n-1},\theta)$ into a series in $\epsilon$
 \begin{align*}
 	F(\epsilon,I_1,...,I_{n-1},\theta)&=F(0,I_1,...,I_{n-1},\theta)+\sum_{j=1}^\infty\epsilon^j\left(\frac{\partial^j F}{\partial\epsilon^j}|_{\epsilon=0}\right)\\
 	&=F^0(I_1,...,I_{n-1})+\sum_{j=1}^\infty\epsilon^jF_j(I_1,...,I_{n-1},\theta),
 \end{align*}
 where $F_j:=\frac{\partial^j F}{\partial\epsilon^j}|_{\epsilon=0}.$
 	Expanding equation \eqref{Fep} for small $\epsilon$,
 	then for 1-order with respect to $\epsilon,$ we obtain for any $ (I_1,...,I_{n-1},\theta)\in\mathcal M^\epsilon$
 	\begin{align*}
 		\epsilon\sum_{j\in \mathbb Z}\hat g_1^j(I_1,...,I_{n-1})\frac{\partial F^0}{\partial I_1}e^{i\left<j,\theta\right>}&+\cdots+\epsilon\sum_{j\in \mathbb Z}\hat g_{n-1}^j(I_1,...,I_{n-1})\frac{\partial F^0}{\partial I_{n-1}}e^{i\left<j,\theta\right>}\\
 		&+\epsilon \sum_{j\in \mathbb Z} ij\hat F_1^j(I_1,...,I_{n-1})f(I_1,...,I_{n-1})e^{i\left<j,\theta\right>}=0
 	\end{align*}
 Thus for $\forall j\in\mathbb Z,$ we have
 \begin{align}\label{gij}
 	(\hat g_1^j,...,\hat g_{n-1}^j)\cdot\nabla F^0+ijf(I)\hat F_1^j=0,\quad \forall (I_1,...,I_{n-1},\theta)\in\mathcal M^\epsilon.
 \end{align}
 	For $j=0,$ equation \eqref{gij} becomes
 	\begin{align*}
 			(\hat g_1^j,...,\hat g_{n-1}^j)\cdot\nabla F^0=0,\quad \forall (I_1,...,I_{n-1},\theta)\in\mathcal M^\epsilon.
 	\end{align*}
 	However, by the first condition we know when $\epsilon$ is small enough, $	(\hat g_1^j,...,\hat g_{n-1}^j)\cdot\nabla F^0\neq 0$ which implies the contradiction.
 	On the other hand,
  taking the limit with respect to $\epsilon$ in equation \eqref{gij}, and using the assumption \eqref{jieg}, we obtain the following equation,
 	\begin{align}\label{gij2}
 		(\hat g_1^j,...,\hat g_{n-1}^j)\cdot\nabla F^0+ijf(I)\hat F_1^j=0,\quad \forall (I_1,...,I_{n-1},\theta)\in\mathcal M.
 	\end{align}
 	 For $j=0,$ equation \eqref{gij} becomes
	\begin{align*}
	(\hat g_1^j,...,\hat g_{n-1}^j)\cdot\nabla F^0=0,\quad \forall (I_1,...,I_{n-1},\theta)\in\mathcal M,
\end{align*}
which is obviously contradict with condition 1.
 	When $(\hat g_1^0,...,\hat g_{n-1}^0)\cdot\nabla F^0=0,\forall (I_1,...,I_{n-1},\theta)\in\mathcal M$  and $f(I_1^*,...,I_{n-1}^*)=0$ for some point $(I_1^*,...,I_{n-1}^*,\theta^*)\in\mathcal M$, then for any $j\neq 0,$ \eqref{gij2} becomes
 	\begin{align*}
 		(\hat g_1^j,...,\hat g_{n-1}^j)\cdot\nabla F^0(I_1^*,...,I_{n-1}^*,\theta^*)=0,
 	\end{align*}
 but by condition 2 we know there must exist $j$ such that $(\hat g_1^j,...,\hat g_{n-1}^j)\cdot\nabla F^0\neq 0,\forall (I_1,...,I_{n-1},\theta)\in\mathcal M$, which yields the contradiction.
 \end{proof}
 \begin{remark}
 	According to the proof of theorem \ref{none}, we know that the surface  tangent to $(\hat g_1^0,...,\hat g_{n-1}^0)$ maybe persist if it does not intersect $f=0$. In this case, the possibility of global transport is absent.
 \end{remark}
 \begin{remark}
 	According to Theorem \ref{none}, if the completely integrable system \eqref{Bao3} admits a first integral $G(I)$ such that the conditions of Theorem \ref{none} are satisfied on the level set $G(I) = 0$, then the perturbed system \eqref{flowperturbation} does not admit any first integral $G_\epsilon(I_1,...,I_{n-1}, \theta)$ satisfying $\lim_{\epsilon \to 0} G_\epsilon = G$, no matter how small $\epsilon$ is.
 \end{remark}

	\section{Symmetries of vector fields with Jacobi multiplier} 
	As is well known, symmetries of vector fields play a crucial role in understanding the properties of the vector fields themselves. For example, at the end of Section 4, we pointed out that if a divergence-free vector field $V$ has two symmetries $W_1$ and $W_2$, and the distribution $\operatorname{span}\{W_1, W_2\}$ is Frobenius integrable, then $V$ is a Poisson vector field with respect to $W_1 \wedge W_2$, and it necessarily satisfies $\mathfrak{m}(V) = 0$. Moreover, in \cite{Zhao2}, we also pointed out that on an $n$-dimensional manifold, if a divergence-free vector field $V$ admits $n-2$ pairwise commuting divergence-free vector fields, then $V$ is completely integrable and can be reduced to a Hamiltonian system with one degree of freedom.
	
	In this section, we consider a class of vector fields more general than divergence-free vector fields, such that divergence-free vector fields arise as a special case.
	Let $V$ be a smooth vector field with a Jacobi multiplier on a closed $n$-dimensional manifold $M$ endowed with a volume form $\Omega,$ that is there is a non-vanishing $\rho\in C^\infty(M)$ such that $L_{\rho V}\Omega=0$. Let $T^m\subset M$ be an invariant $m$-torus of $V$ and assume that in a neighborhood $\mathcal O$ of the torus $T^m,$ one can construct coordinates $(\theta_1,...,\theta_m,I_1,...,I_{l}),$ where $(\theta_1,...,\theta_m)\in\mathbb T^m=\mathbb (R/2\pi\mathbb Z)^m$ and $I_j\in(-\gamma_j,\gamma_j)(\gamma_j>0),$ such that $\Omega|_{\mathcal O}=d\theta_1\wedge\cdots\wedge d\theta_m\wedge dI_1\wedge\cdots\wedge dI_{l}$ and $T=\{I_1=0,...,I_{l}=0\},$ where $m+l=n.$ 
	 we consider the following system:
	 \begin{align*}
	 		\dot I_1=0,\cdots,\dot I_l=0,\quad \dot\theta_1=\frac{\omega_1(I_1,...,I_l)}{\rho},\cdots, \dot\theta_m=\frac{\omega_m(I_1,...,I_l)}{\rho},
	 \end{align*}
	 or for simplicity,
	\begin{align}\label{Volume}
		\dot I=0,\quad \dot\theta_1=\frac{\omega_1(I)}{\rho},\cdots, \dot\theta_m=\frac{\omega_m(I)}{\rho},
	\end{align}
	where $\rho(\theta_1,...,\theta_m,I_1,...,I_l)$ is a smooth function which is nowhere vanished. It is clear that this system preserves the volume form $\Omega'=\rho d\theta_1\wedge\cdots\wedge d\theta_m\wedge dI_1\wedge\cdots\wedge dI_{l},$ and when $\sigma=\sigma(I),$ this system is divergence-free. In fact, let $V:=\frac{\omega_1(I)}{\rho}\frac{\partial}{\partial\theta_1}+\cdots+\frac{\omega_m(I)}{\rho}\frac{\partial}{\partial\theta_m}$, we can calculate that
	\begin{align*}
		&L_V\Omega=L_{\frac{\omega_1(I)}{\rho}\frac{\partial}{\partial\theta_1}+\cdots+\frac{\omega_m(I)}{\rho}\frac{\partial}{\partial\theta_m}}(\rho d\theta_1\wedge\cdots\wedge d\theta_m\wedge dI_1\wedge\cdots\wedge dI_{l})\\
		&=\left(\frac{\omega_1(I)}{\rho}\frac{\partial\rho}{\partial\theta_1}+\cdots+\frac{\omega_m(I)}{\rho}\frac{\partial\rho}{\partial\theta_m}\right) d\theta_1\wedge\cdots\wedge d\theta_m\wedge dI_1\wedge\cdots\wedge dI_{l}\\
		&\quad+\sum_{i=1}^m\rho  d\theta_1\wedge\cdots \wedge d\left(\frac{\omega_i(I)}{\rho}\right)\wedge\cdots\wedge d\theta_m\wedge dI_1\wedge\cdots\wedge dI_l\\
		&=\left(\frac{\omega_1(I)}{\rho}\frac{\partial\rho}{\partial\theta_1}+\cdots+\frac{\omega_m(I)}{\rho}\frac{\partial\rho}{\partial\theta_m}\right)d\theta_1\wedge\cdots\wedge d\theta_m\wedge dI_1\wedge\cdots\wedge dI_{l}\\
		&\quad-\sum_{i=1}^m\frac{\omega_i(I)}{\rho}\frac{\partial\rho}{\partial\theta_i} d\theta_1\wedge\cdots\wedge d\theta_m\wedge dI_1\wedge\cdots\wedge dI_{l}=0.
	\end{align*}

	\begin{theorem}\label{T1}
		Assume that  $m\geq l+1$ and in set $Q:=(-\gamma_1,\gamma_1)\times\cdots\times(-\gamma_{l},\gamma_{l}),$
		
		1) $\left<p,\omega(I)\right>\neq0$ for any $p\in\mathbb Z^l\setminus\{0\}$ on $Q$, where $\omega(I)=(\omega_1(I),...,\omega_l(I)),$
		
		2) $\mathrm{dim}$ $\mathrm{span}$$\left\{\nabla_I\left(\frac{\omega_i}{\omega_j}\right), 1\leq i,j\leq m,i\neq j,\right\}=l,$ where $\nabla_I$ denote that for any function $\mathcal{F}\in C^\infty(M),\nabla_I \mathcal F=\left(\frac{\partial \mathcal F}{\partial I_1},...,\frac{\partial \mathcal F}{\partial I_l}\right)$.
		
		If $W$ is the symmetry field of the system \eqref{Volume} in the region $\mathbb T^m\times Q,$ then
		\begin{align}
			W=\sigma(I,\theta)V,
		\end{align}
		where $\sigma(I,\theta)$ is some smooth function and $V=\frac{\omega_1(I)}{\rho}\frac{\partial}{\partial\theta_1}+\cdots+\frac{\omega_m(I)}{\rho}\frac{\partial}{\partial\theta_m}$ is the vector field of the system \eqref{Volume}.
		
		Moreover, set 
		\begin{align}\label{Wi}
			W_{j}=\sum_{p\in\mathbb Z^m,\theta\in\mathbb T^m} U^j_p(I)e^{i\left<p,\theta\right>},\quad j=1,...,m
		\end{align}
		 if $(U_0^1,...,U_0^m)=\mu(\omega_1,...,\omega_m)$ for some function $\mu(I)$, then
		\begin{align}\label{fan}
			W=\mathfrak{n}(I)V,
		\end{align}
		where $\mathfrak{n}(I)$ is some smooth function.
	\end{theorem}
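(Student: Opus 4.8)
I would analyse $[V,W]=0$ by Fourier series in the angles $\theta=(\theta_{1},\dots,\theta_{m})$, in the spirit of Remark~\ref{R3}. Write $W=\sum_{i=1}^{m}A_{i}(\theta,I)\,\partial_{\theta_{i}}+\sum_{k=1}^{l}B_{k}(\theta,I)\,\partial_{I_{k}}$. Since $V$ has no $\partial_{I_{k}}$-component and $V^{\theta_{i}}=\omega_{i}(I)/\rho$, the relation $[V,W]=0$ splits into the $I_{k}$-equations $V(B_{k})=0$ and the $\theta_{i}$-equations
\[
V(A_{i})+\frac{\omega_{i}}{\rho^{2}}\,W(\rho)-\frac{1}{\rho}\sum_{k=1}^{l}B_{k}\,\frac{\partial\omega_{i}}{\partial I_{k}}=0,\qquad i=1,\dots,m.
\]
Because $\rho$ is nowhere zero, $V$ and $\rho V=\sum_{i}\omega_{i}(I)\,\partial_{\theta_{i}}$ share the same orbits, $\rho\,\Omega$ is an invariant density for $V$ (this is exactly the multiplier hypothesis $L_{\rho V}\Omega=0$), and condition~1 makes the translation flow of $\rho V$ on each torus $\mathbb{T}^{m}\times\{I\}$ nonresonant. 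As in Remark~\ref{R3}, this forces every first integral of $V$ on $\mathbb{T}^{m}\times Q$ to be a function of $I$ alone; in particular $B_{k}=B_{k}(I)$.

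To show $W$ is tangent to the tori, eliminate $W(\rho)$: for $i\neq j$ the $\theta$-equations give
\[
V\!\big(\omega_{j}A_{i}-\omega_{i}A_{j}\big)=\frac{\omega_{j}^{2}}{\rho}\sum_{k=1}^{l}B_{k}\,\frac{\partial}{\partial I_{k}}\!\Big(\frac{\omega_{i}}{\omega_{j}}\Big).
\]
Integrating over $\mathbb{T}^{m}$ against the invariant density $\rho\,d\theta$ annihilates the left-hand side, so $\sum_{k}B_{k}\,\partial_{I_{k}}(\omega_{i}/\omega_{j})\equiv0$ for every $i\neq j$; by condition~2 these gradients span $\mathbb{R}^{l}$, hence $B\equiv0$. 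With $B\equiv0$ the $\theta$-equations reduce to $V(A_{i})=-(\omega_{i}/\rho^{2})W(\rho)$, so $V(A_{i}/\omega_{i})=-W(\rho)/\rho^{2}$ is independent of $i$, and therefore $A_{i}/\omega_{i}-A_{j}/\omega_{j}$ is a first integral of $V$, i.e.\ a function of $I$. Writing $A_{i}/\omega_{i}=G(\theta,I)+\psi_{i}(I)$ and substituting back collapses the system into the single cohomological equation
\[
(\rho V)(\rho G)=-\sum_{j=1}^{m}\omega_{j}\,\psi_{j}(I)\,\frac{\partial\rho}{\partial\theta_{j}}
\]
on the nonresonant torus. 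Resolving it mode by mode with the small divisors $\langle p,\omega(I)\rangle$ from condition~1 and tracking the $I$-dependence through condition~2 should force $(\omega_{1}\psi_{1},\dots,\omega_{m}\psi_{m})$ to be a multiple of $(\omega_{1},\dots,\omega_{m})$, i.e.\ all the $\psi_{i}$ equal; absorbing that common value into $G$ gives $A_{i}=(\sigma/\rho)\omega_{i}$ with $\sigma:=\rho(G+\psi)$, that is $W=\sigma(I,\theta)V$.

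For the refinement, take $\theta$-means in $A_{i}=\omega_{i}(G+\psi_{i})$: $U^{i}_{0}=\omega_{i}\big(\widehat{G}_{0}+\psi_{i}\big)$, so the hypothesis $(U^{1}_{0},\dots,U^{m}_{0})=\mu(I)(\omega_{1},\dots,\omega_{m})$ means precisely that $\psi_{i}=\mu-\widehat{G}_{0}$ is the same function of $I$ for all $i$. Setting $\psi_{i}\equiv\psi$ in the cohomological equation gives $(\rho V)(\rho G)=-\psi\sum_{j}\omega_{j}\partial_{\theta_{j}}\rho=(\rho V)(-\psi\rho)$, hence $(\rho V)\big(\rho(G+\psi)\big)=0$; thus $\rho(G+\psi)$ is a first integral of $V$, a function of $I$, and $\sigma=\rho(G+\psi)=\mathfrak{n}(I)$, giving $W=\mathfrak{n}(I)V$. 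The step I expect to be the genuine obstacle is the proportionality in the first part — extracting from that one cohomological equation, together with the twist condition~2 and the nonresonance condition~1, that the moduli $\psi_{i}(I)$ collapse to a single function (and hence that $W$ is a scalar multiple of $V$); this is the weighted analogue of the use of the Wronskian condition in Remark~\ref{R3}.
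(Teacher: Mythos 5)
Your treatment of the tangency step and of the refined conclusion is essentially the paper's own argument. The splitting of $[V,W]=0$ into $V(B_k)=0$ plus the $\theta$-equations, the nonresonance/Fourier argument giving $B_k=B_k(I)$, the elimination identity $V(\omega_jA_i-\omega_iA_j)=\tfrac{\omega_j^2}{\rho}\sum_kB_k\,\partial_{I_k}(\omega_i/\omega_j)$, the averaging over $\mathbb{T}^m$ (the paper multiplies through by $\rho$ and averages against $d\theta$; integrating $V(\cdot)$ against $\rho\,d\theta$ as you do is the same computation), and the use of condition 2) to get $B\equiv 0$ all match \eqref{ome} and the surrounding text. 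Your observation that $A_i/\omega_i-A_j/\omega_j$ is a first integral, hence a function of $I$, is an equivalent repackaging of the paper's Fourier step yielding $W_i=U_0^i(I)+\omega_iF$ in \eqref{WW}; and your handling of the ``moreover'' part (the mean condition forces all $\psi_i$ equal, then $(\rho V)(\rho(G+\psi))=0$, nonresonance gives $\rho(G+\psi)=\mathfrak{n}(I)$, hence $W=\mathfrak{n}(I)V$) is precisely the paper's concluding computation.

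The gap is exactly the step you flagged: you do not prove that the cohomological equation $(\rho V)(\rho G)=-\sum_j\omega_j\psi_j\,\partial_{\theta_j}\rho$, together with conditions 1)--2), forces the $\psi_i(I)$ to coincide, and in fact no such argument can work. Mode by mode, for $p\neq 0$ the equation merely determines the $p$-th Fourier coefficient of $\rho G$ as $-\langle(\omega_1\psi_1,\dots,\omega_m\psi_m),p\rangle\,\hat\rho_p/\langle\omega(I),p\rangle$; it imposes no relation among the $\psi_i$ themselves. Concretely, if $\rho$ depends on $I$ alone (e.g.\ $\rho\equiv 1$), then $W=\sum_i\omega_i(I)c_i(I)\,\partial_{\theta_i}$ commutes with $V$ for arbitrary functions $c_i(I)$, and such a $W$ is a pointwise multiple of $V$ only when all the $c_i$ coincide; so the unqualified proportionality $W=\sigma(I,\theta)V$ genuinely requires the mean-proportionality hypothesis $(U_0^1,\dots,U_0^m)=\mu(I)(\omega_1,\dots,\omega_m)$. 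You should also note that the paper's own proof does not supply this step either: after establishing $\bar W_k=0$ and \eqref{WW}, it invokes that hypothesis before substituting back into \eqref{F1} and only then concludes $W=\mathfrak{n}(I)V$. So your instinct that this is the genuine obstacle is correct, but it is an obstruction to the first assertion of the theorem as stated rather than a missing trick: what your argument (like the paper's) actually establishes is tangency to the tori, the structural form $W_i=\omega_i\bigl(G(\theta,I)+\psi_i(I)\bigr)$, and $W=\mathfrak{n}(I)V$ under the mean condition.
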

	\begin{proof}
		Assume the vector field $W$ has the form 
		\begin{align}
			W=W_1\frac{\partial}{\partial\theta_1}+\cdots+W_n\frac{\partial}{\partial\theta_m}+\bar W_{1}\frac{\partial}{\partial I_1}+\cdots+\bar W_{l}\frac{\partial}{\partial I_l},
		\end{align}
		since $[V,W]=VW-WV=0,$ we have
		\begin{align}
			\frac{1}{\rho}\left(\omega_1\frac{\partial W_1}{\partial\theta_1}+\cdots+\omega_m\frac{\partial W_1}{\partial\theta_m}\right)&=W_1\frac{\partial}{\partial\theta_1}\frac{\omega_1}{\rho}+\cdots+W_m\frac{\partial}{\partial\theta_m}\frac{\omega_1}{\rho}+\bar W_{1}\frac{\partial}{\partial I_1}\frac{\omega_1}{\rho}+\cdots+\bar W_{l}\frac{\partial}{\partial I_l}\frac{\omega_1}{\rho},\label{F1}\\
			&\vdots\label{F0}\\
				\frac{1}{\rho}\left(\omega_1\frac{\partial W_m}{\partial\theta_1}+\cdots+\omega_m\frac{\partial W_m}{\partial\theta_m}\right)&=W_1\frac{\partial}{\partial\theta_1}\frac{\omega_m}{\rho}+\cdots+W_m\frac{\partial}{\partial\theta_m}\frac{\omega_m}{\rho}+\bar W_{1}\frac{\partial}{\partial I_1}\frac{\omega_m}{\rho}+\cdots+\bar W_{l}\frac{\partial}{\partial I_l}\frac{\omega_m}{\rho},\label{F2}\\
		\omega_1(I)\frac{\partial \bar W_{1}}{\partial\theta_1}+&\cdots+\omega_m(I)\frac{\partial \bar W_{1}}{\partial\theta_m}=0,\label{F3}\\
		\vdots\label{F5}\\
			\omega_1(I)\frac{\partial \bar W_{l}}{\partial\theta_1}+&\cdots+\omega_m(I)\frac{\partial \bar W_{l}}{\partial\theta_m}=0,\label{F6}
		\end{align}
		We show that $\bar W_{1},...,\bar W_l$ do not depend on $\theta=(\theta_1,...,\theta_n)$. To do this, we first solve \eqref{F3} by the Fourier method. We set
		$$\bar W_{1}=\sum_{p\in\mathbb Z^m,\theta\in\mathbb T^m} \bar U^1_p(I)e^{i\left<p,\theta\right>},$$
		then we get 
		\begin{align*}
			\sum_{p\in\mathbb Z^m,\theta\in\mathbb T^m} i\bar U^1_p(I)\left<\omega(I),p\right>e^{i\left<p,\theta\right>}=0,
		\end{align*}
		hence, $$\bar U^1_p(I)\left<\omega(I),p\right>=0.$$
		The first condition of this theorem implies that for $\sum_{i=1}^np_i^2\neq0$ the function $\left<\omega(I),p\right>$ has no zero point on $Q.$ Consequently,
		$$\bar U^1_p=0,$$
		if $p=(p_1,...,p_n)$ satisfies $\sum_{i=1}^np_i^2\neq0$. So we know that $\bar W_{1}$ is a function only of $I.$ Similarly, we can know that $\bar W_{2},...,\bar W_{l}$ are also functions only of $I.$ 
		
		Moreover, from $\eqref{F1},\eqref{F0}$ and \eqref{F2}, we can get
		\begin{align}\label{ome}
		&\omega_j\left(\omega_1\frac{\partial W_i}{\partial \theta_1}+\cdots+\omega_m\frac{\partial W_i}{\partial \theta_m}\right)-	\omega_i\left(\omega_1\frac{\partial W_j}{\partial \theta_1}+\cdots+\omega_m\frac{\partial W_j}{\partial \theta_m}\right)\nonumber\\
		&\quad=\bar W_{1}\left(\omega_j\frac{\partial \omega_i}{\partial I_1}-\omega_i\frac{\partial \omega_j}{\partial I_1}\right)+\cdots+\bar W_{l}\left(\omega_j\frac{\partial \omega_i}{\partial I_l}-\omega_i\frac{\partial \omega_j}{\partial I_l}\right).
		\end{align}
		Its right-hand side dependents only on the variable $I.$ Averaging both sides of this equality over the angle variables $\theta_1,...,\theta_m,$ we obtain the relation 
		$$\left<(\bar W_1,\cdots,\bar W_l),
	\nabla_I\left(\frac{\omega_i}{\omega_j}\right)\right>=0,\quad 1\leq i,j\leq m,i\neq j.$$
		According to the second  condition of this theorem, it follows that $(\bar W_1,\cdots,\bar W_l)=(0,\cdots,0).$ Consequently, the symmetry field is tangent to the invariant tori $I=$const.
		
		Now solve \eqref{ome} with zero right-hand side by the Fourier series \eqref{Wi}. So we have
		\begin{align}
			\sum_{p\in\mathbb Z^m,\theta\in\mathbb T^m} i(\omega_jU^i_p(I)-\omega_iU^j_p(I))\left<\omega(I),p\right>e^{i\left<p,\theta\right>}=0,
		\end{align}
		which means that $$(\omega_jU^i_p(I)-\omega_iU^j_p(I))\left<\omega(I),p\right>=0.$$
		Again by the first condition of this theorem, we have 
		\begin{align*}
			\omega_jU^i_p(I)=\omega_iU^j_p(I),\quad \left(\sum_{i=1}^np_i^2\neq0\right),
		\end{align*}
	which means that 
	\begin{align*}
		\frac{U^1_p(I)}{\omega_1}=\cdots=	\frac{U^m_p(I)}{\omega_m},\quad  \left(\sum_{i=1}^mp_i^2\neq0\right).
	\end{align*}
	Thus, 
	\begin{align}\label{WW}
		W_1=U_0^1(I)+\omega_1F,\cdots,W_m=U_0^m(I)+\omega_mF,
	\end{align}
	where
	$$F=\sum_{p\in \mathbb Z^m\setminus\{0\},\theta\in\mathbb T^m}F_p(I)e^{i\left<p,\theta\right>}$$
	is a smooth function periodic in $\theta_1,...,\theta_n.$
	
	Substituting $\eqref{WW}$ into \eqref{F1}, and if  $(U_0^1,...,U_0^m)=\mu(I)(\omega_1,...,\omega_m)$  we obtain
	\begin{align*}
			\frac{1}{\rho}\left(\omega^2_1\frac{\partial F}{\partial\theta_1}+\cdots+\omega_1\omega_m\frac{\partial F}{\partial\theta_m}\right)&=-\omega_1W_1\frac{1}{\rho^2}\frac{\partial\rho}{\partial\theta_1}-\cdots-\omega_1W_m\frac{1}{\rho^2}\frac{\partial\rho}{\partial\theta_m}\iff \\
		\rho\left(\omega_1\frac{\partial F}{\partial\theta_1}+\cdots+\omega_m\frac{\partial F}{\partial\theta_m}\right)&=-W_1\frac{\partial\rho}{\partial\theta_1}-\cdots-W_m\frac{\partial\rho}{\partial\theta_m}\\
		&=-(U_0^1(I)+\omega_1F)\frac{\partial\rho}{\partial\theta_1}-\cdots-(U_0^m(I)+\omega_mF)\frac{\partial\rho}{\partial\theta_m}\iff\\
		\omega_1\frac{\partial}{\partial\theta_1}(\rho F)+\cdots+	\omega_m\frac{\partial}{\partial\theta_m}(\rho F)&=-U_0^1\frac{\partial\rho}{\partial\theta_1}-\cdots-U_0^m\frac{\partial\rho}{\partial\theta_m}=-\mu(I)\omega_1\frac{\partial\rho}{\partial\theta_1}-\cdots--\mu(I)\omega_m\frac{\partial\rho}{\partial\theta_m}.
	\end{align*}
	Thus, we get the following equation
	\begin{align*}
		\omega_1\frac{\partial}{\partial\theta_1}(\rho F+\mu\rho)+\cdots+\omega_m\frac{\partial}{\partial\theta_m}(\rho F+\mu\rho)=0,
	\end{align*}
	which implies that the expression in brackets does not depend on the angle variables $\theta_1,...,\theta_m,$ so there is a function $\mathfrak{n}(I)$ such that
	\begin{align*}
		\rho F+\mu\rho=\mathfrak{n}(I)\iff F=-\mu(I)+\frac{\mathfrak{n}(I)}{\rho}.
	\end{align*}
	Using \eqref{WW}, we finally obtain the formula \eqref{fan}
	\begin{align*}
		W_1=\frac{\omega_1(I)\mathfrak{n}(I)}{\rho},\cdots 	W_m=\frac{\omega_m(I)\mathfrak{n}(I)}{\rho},\bar W_{1}=0,\cdots,\bar W_l=0.
	\end{align*}
	This proves the theorem.
	\end{proof}
	\begin{remark}
It is worth noting that Condition 2 of Theorem~\ref{T1} can be satisfied only when $m \geq l + 1$. Otherwise, using the method in the proof of Theorem~\ref{T1}, we can only conclude that $\bar{W}_1, \ldots, \bar{W}_l$ are functions of $I$ alone, but we cannot deduce that $\bar{W}_1 = 0, \ldots, \bar{W}_l = 0$.
	\end{remark}
	\begin{remark}
		Similar to the discussion in Remark \ref{R3}, we can also conclude that every first integral $\mathcal{S}$ of the system \eqref{Volume} that satisfies Conditions 1) of Theorem \ref{T1}  must be  of the form $\mathcal{S} = \mathcal{S}(I)$.
	\end{remark}
	\begin{remark}
		Consider the following more general symmetries of system \eqref{Volume}:
		\begin{align*}
			L_{V}^NW=0,\quad 	L_{V}^{N-1}W\neq 0,\quad N\in \mathbb Z\setminus\{0\}.
		\end{align*}
		We call $W$ is an $N$-order symmetry of $V.$ 
		As is well known, when $N=2$, $W$ is also called a master symmetry of $V.$ Suppose $W$ is an $N$-order symmetry of $V$ satisfying the condition 1), 2) and 3), if the average of $\bar W_{1},...,\bar W_l$ on $\mathbb T^n$ are all 0, that is $\bar U_0^1=0,...,\bar U_0^l=0$, then $\bar W_{1},...,\bar W_l$ must be 0, which means that 
		vector field $W$ must be tangent to the torus $I=\mathrm{const}$. In this case, we have 
			\begin{align*}
			W=\eta(I,\theta)V,
		\end{align*}
		where $\eta(I,\theta)$ is some smooth function.
	\end{remark}
	\begin{proof}
	Firstly, we prove the statement for $N=2,$ i.e.,
	$$[V,[V,W]]=0.$$
	By the Theorem \ref{T1}, we know that $[V,W]=\sigma(I,\theta)V$  for some function $\sigma(I,\theta).$ Using $VW=WV+\sigma(I,\theta)V$, we can get the \eqref{F3}-\eqref{F6}, then similar to prove of Theorem \ref{T1} we have $$\bar U^i_p=0,\quad i=1,...,l$$
	if $p=(p_1,...,p_n)$ satisfies $\sum_{i=1}^np_i^2\neq0$. Thus by $U_0=0$, we know that
		$$\bar W_{i}=\sum_{p\in\mathbb Z^n,\theta\in\mathbb T^n} \bar U^i_p(I)e^{i\left<p,\theta\right>}=0,\quad i=1,...,l.$$
		Inductively, we can prove the conclusion for any $N\geq 2.$
	\end{proof}

\section{Weighted (Partial) Integrability Functional}

In the previous section we analyzed analytic symmetries of vector fields admitting a Jacobi multiplier. We now introduce a complementary notion that quantifies the extent of their integrability in a weighted measure sense.

Such vector fields preserve the weighted measure \(d\mu_\rho = \rho\,\Omega\),
and therefore generalize the divergence-free case (\(\rho \equiv 1\)) studied
in previous sections.
While Theorem 6.1 established a rigidity property for analytic symmetries in
this weighted setting, it is natural to ask whether the quantitative
notion of partial integrability introduced earlier through the functional
\(m(V)\) can be extended to fields with Jacobi multipliers.
The following construction provides a consistent generalization,
allowing us to measure how much of \(M\) is foliated by
ergodic invariant tori of \(V\) with respect to the weighted measure
\(d\mu_\rho\).

\begin{definition}[Weighted Partial Integrability Functional]
Let \(M\) be a closed analytic Riemannian manifold of dimension \(n\)
with volume form \(\Omega\), and let \(V\) be an analytic vector field on \(M\)
admitting a Jacobi multiplier \(\rho \in C^\omega(M)\), i.e.
\begin{equation}
\operatorname{div}(\rho V) = 0.
\end{equation}
Denote by \(d\mu_\rho := \rho\,\Omega\) the corresponding invariant measure.
Let \(\mathcal{T}_\rho(V) \subset M\) be the union of all ergodic
invariant \((n-1)\)-dimensional analytic tori of \(V\)
with respect to \(d\mu_\rho\).
We define the \emph{weighted partial integrability functional} of \(V\) as
\begin{equation}
m_\rho(V)
   := \frac{\mu_\rho(\mathcal{T}_\rho(V))}{\mu_\rho(M)}
   = \frac{\displaystyle\int_{\mathcal{T}_\rho(V)} \rho\,\Omega}
           {\displaystyle\int_{M} \rho\,\Omega}.
\end{equation}
\end{definition}

\begin{remark}
\leavevmode
\begin{enumerate}
\item The functional satisfies \(0 \le m_\rho(V) \le 1\).
\item If \(\rho \equiv 1\), then \(m_\rho(V) = m(V)\), recovering
the unweighted definition for divergence-free vector fields.
\item Writing \(X = \rho V\), we observe that \(X\) is divergence-free,
and that the invariant tori of \(X\) coincide geometrically with those of \(V\).
Hence, one may equivalently set
\begin{equation}
m_\rho(V) := m(X), \qquad X = \rho V.
\end{equation}
\item A complete persistence and continuity theory for \(m_\rho(V)\)
would require a KAM theorem for analytic maps preserving the
weighted measure \(d\mu_\rho\),
which lies beyond the classical volume-preserving case.
Nevertheless, \(m_\rho(V)\) provides a natural geometric
quantification of integrability for systems satisfying
\(L_V(\rho\,\Omega)=0\).
\end{enumerate}
\end{remark}

\begin{theorem}
Let \(V_0\) be an analytic vector field on a closed Riemannian manifold \(M\)
admitting a Jacobi multiplier \(\rho \in C^\omega(M)\),
and assume that \(V_0\) is Arnold--integrable and nondegenerate
in the sense of Theorem~4.1.
Then the weighted partial integrability functional \(m_\rho(V)\)
is continuous at \(V_0\) in the \(C^\omega\) topology.
\end{theorem}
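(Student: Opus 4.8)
The plan is to reduce the statement to the divergence-free continuity theorem, Theorem~\ref{T4}, via the substitution $X:=\rho V$. For every analytic vector field $V$ in the domain of $m_\rho$, i.e. with $\operatorname{div}(\rho V)=0$, the field $X=\rho V$ is analytic and divergence-free with respect to $\Omega$; and since $\rho$ is a fixed nowhere-vanishing analytic function on the compact manifold $M$ (so that both $\rho$ and $\rho^{-1}$ extend holomorphically to a complex neighbourhood of $M$), multiplication by $\rho$ is a continuous linear operation in the $C^\omega$ topology. Hence $V\to V_0$ in $C^\omega$ forces $X=\rho V\to X_0:=\rho V_0$ in $C^\omega$. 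Moreover the flow of $V$ is a positive time-reparametrisation of the flow of $X$, so $V$ and $X$ possess exactly the same ergodic invariant $(n-1)$-tori as subsets of $M$; thus $\mathcal T_\rho(V)=\mathcal T(X)$ and
\begin{equation}
m_\rho(V)=\frac{\mu_\rho(\mathcal T(X))}{\mu_\rho(M)}=1-\frac{\mu_\rho\bigl(M\setminus\mathcal T(X)\bigr)}{\mu_\rho(M)}.
\end{equation}

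Next I would reinterpret the hypotheses on $V_0$ in terms of $X_0$. Since Theorem~\ref{T4} is literally a statement about divergence-free fields, the only consistent reading of ``$V_0$ is Arnold--integrable and non-degenerate in the sense of Theorem~\ref{T4}'' is that, after the substitution, $X_0=\rho V_0$ satisfies those hypotheses: in each invariant domain $\mathcal O_j\cong\mathbb T^{n-1}\times Q_j$ one has $\Omega|_{\mathcal O_j}=d\theta_1\wedge\cdots\wedge d\theta_{n-1}\wedge dI$ and $V_0=\rho^{-1}\sum_{i=1}^{n-1}\omega_i(I)\,\partial_{\theta_i}$ as in \eqref{Volume} with the $\omega_i$ satisfying the Wronskian non-degeneracy \eqref{NZ}, so that $X_0$ takes on $\mathcal O_j$ precisely the form \eqref{Bao} with $f_i=\omega_i$, and \eqref{NZ} holds; that is, $X_0$ is an analytic, well--integrable, non-degenerate divergence-free vector field in the sense of Definition~\ref{D4}. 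By Remark~\ref{R3} the vector $\omega(I)$ is then rationally independent for every $I$, so every torus $\mathbb T^{n-1}\times\{I\}\subset\mathcal O_j$ is ergodic for $X_0$ and hence for $V_0$; therefore $\mathcal T_\rho(V_0)=\mathcal T(X_0)$ differs from $M\setminus G$ only by a set of $\Omega$-measure zero, and since $0<c:=\min_M\rho\le\rho\le\max_M\rho=:C<\infty$ this set has $\mu_\rho$-measure zero as well. In particular $m_\rho(V_0)=1$ (equivalently, this is $\mathfrak m(X_0)=1$, Proposition~\ref{P4}(2)).

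It remains to transfer the KAM measure estimate to the weighted setting. Applying Theorem~\ref{T4} to $X_0$: writing $\epsilon:=\|X-X_0\|_{C^\omega}$, we have $\mathfrak m(X)\to\mathfrak m(X_0)=1$ as $\epsilon\to0$, so the outer $\Omega$-measure of $M\setminus\mathcal T(X)$, which equals $1-\mathfrak m(X)$, tends to $0$. Since $\mu_\rho\le C\,\Omega$ as measures, the outer $\mu_\rho$-measure of $M\setminus\mathcal T(X)$ is at most $C\,(1-\mathfrak m(X))$, whence
\begin{equation}
m_\rho(V)=1-\frac{\mu_\rho\bigl(M\setminus\mathcal T(X)\bigr)}{\mu_\rho(M)}\ \ge\ 1-\frac{C}{\mu_\rho(M)}\bigl(1-\mathfrak m(X)\bigr),
\end{equation}
and the right-hand side tends to $1$ as $\epsilon\to0$; together with $m_\rho(V)\le1$ this gives $m_\rho(V)\to1=m_\rho(V_0)$, i.e. continuity of $m_\rho$ at $V_0$ in the $C^\omega$ topology. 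Equivalently, one may simply repeat the proof of Theorem~\ref{T4} verbatim with the perturbed field replaced by $X=\rho V$ and $\Omega$ replaced by $\mu_\rho$: the KAM Theorem~\ref{DKAM} furnishes, in each of the finitely many domains $\mathcal O_j(\delta_j)$, a Cantor family of ergodic invariant $(n-1)$-tori of $X$ whose $\Omega$-measure, hence (by $c\,\Omega\le\mu_\rho\le C\,\Omega$) whose $\mu_\rho$-measure, exhausts that of $\mathcal O_j(\delta_j)$ as $\epsilon\to0$, and summing over the finite index set yields $m_\rho(V)\to1$.

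The only genuinely delicate point I anticipate is the passage between the weighted and the unweighted measures: the identification $m_\rho(V)=\mathfrak m(\rho V)$ recorded in the remark above is exact only when $\rho$ is constant, so the KAM measure estimate must instead be transported through the two-sided comparison $c\,\Omega\le\mu_\rho\le C\,\Omega$, which is available precisely because $M$ is compact and $\rho$ is analytic and nowhere vanishing. The remaining ingredients -- the $C^\omega$-continuity of multiplication by $\rho$, and the reading of ``Arnold--integrable and non-degenerate'' for a Jacobi-multiplier field as the same property of $\rho V_0$ -- are routine.
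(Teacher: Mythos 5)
Your proposal is correct and follows essentially the same route as the paper's own proof: substitute $X=\rho V$, use that multiplication by the fixed nowhere-vanishing analytic $\rho$ is $C^{\omega}$-continuous and that $V$ and $X$ have the same ergodic invariant tori (time reparametrization), and then invoke the continuity of the divergence-free functional (Theorem 4.1) at $X_0=\rho V_0$. The only divergence is a refinement rather than a different method: the paper simply adopts the identification $m_\rho(V)=\mathfrak m(\rho V)$ from its Remark as the working definition, whereas you correctly observe that this identification is not literal for non-constant $\rho$ and instead transport the KAM measure estimate through the comparison $c\,\Omega\le\mu_\rho\le C\,\Omega$, which actually patches a small sloppiness in the paper's argument.
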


\begin{proof}
Suppose \(\{V_k\}_{k\ge1}\) is a sequence of analytic vector fields on \(M\)
such that \(V_k \to V_0\) in the \(C^\omega\) topology and that each \(V_k\)
admits the same Jacobi multiplier \(\rho\), i.e.
\begin{equation}
L_{V_k}(\rho\,\Omega)=0
\qquad\Longleftrightarrow\qquad
\operatorname{div}(\rho V_k)=0.
\end{equation}

For each \(k\), define a new analytic vector field
\begin{equation}
X_k := \rho\,V_k.
\end{equation}
Then \(\operatorname{div} X_k = 0\),
so each \(X_k\) is a \emph{divergence-free} field on \(M\).
Similarly, let
\begin{equation}
X_0 := \rho\,V_0,
\end{equation}
which is also divergence-free because \(L_{V_0}(\rho\,\Omega)=0\).

By construction, \(X_k \to X_0\) in the \(C^\omega\) topology,
since multiplication by a fixed analytic function \(\rho\)
is continuous in this norm.
Moreover, the flows of \(V_k\) and \(X_k\) share the same
geometric trajectories on \(M\):
if \(\gamma_k(t)\) is an integral curve of \(V_k\),
then \(\gamma_k\) is also an integral curve of \(X_k\)
after a time reparametrization given by
\begin{equation}
\frac{ds}{dt} = \rho(\gamma_k(t)).
\end{equation}
Thus the invariant tori of \(V_k\) and \(X_k\)
coincide as subsets of \(M\),
differing only in parametrization speed along the flow.

From the definition of the weighted functional,
\begin{equation}
m_\rho(V_k) := m(X_k),
\end{equation}
and similarly \(m_\rho(V_0) = m(X_0)\).

Since each \(X_k\) is divergence-free and analytic,
Theorem~4.1  applies to the sequence \(\{X_k\}\):
the functional \(m(X)\) is continuous at any analytic,
nondegenerate, Arnold--integrable divergence-free field.
Therefore,
\begin{equation}
m(X_k) \longrightarrow m(X_0)
\qquad \text{as } k\to\infty.
\end{equation}

Combining the above identities yields
\begin{equation}
\lim_{k\to\infty} m_\rho(V_k)
   = \lim_{k\to\infty} m(X_k)
   = m(X_0)
   = m_\rho(V_0),
\end{equation}
which proves the continuity of \(m_\rho(V)\) at \(V_0\).
\end{proof}

\subsection{Numerical Scheme for the Computation of $m_\rho(V)$}

The functional $m_\rho(V)$ quantifies the weighted measure of invariant
tori of an analytic vector field $V$ satisfying
$\operatorname{div}(\rho V)=0$.  In general, explicit analytic
computation of $m_\rho(V)$ is not feasible, since the ergodic partition
of phase space depends sensitively on the local stability of trajectories.
However, a consistent numerical approximation can be obtained by means
of finite-time Lyapunov exponents, which provide a quantitative
criterion for distinguishing quasi-periodic and irregular orbits.

Let $V$ be an analytic vector field on a compact domain
$M \subset \mathbb{R}^4$ preserving the weighted measure
$d\mu_\rho = \rho\,dx_1\wedge dy_1\wedge dx_2\wedge dy_2$, i.e.
$\operatorname{div}(\rho V)=0$.
Denote by $\Phi_t$ its flow.
For each initial condition $u_0\in M$, consider the variational
equations
\begin{equation}
\dot{u} = V(u), \qquad
\dot{\delta u} = DV(u)\,\delta u,
\end{equation}
which describe the linearized evolution of nearby trajectories. The maximal Lyapunov exponent associated with $u_0$ is defined by
\begin{equation}
\lambda_{\max}(u_0)
   = \lim_{t\to\infty}
     \frac{1}{t}\log
     \frac{\|\delta u(t)\|}{\|\delta u(0)\|}.
\end{equation}
For quasi-periodic orbits lying on invariant tori, the evolution of
infinitesimal deviations is neutral and
$\lambda_{\max}(u_0)=0$.
For irregular or chaotic trajectories, $\lambda_{\max}(u_0)>0$.

To approximate the measure of the quasi-integrable region, we sample a
finite set of initial conditions $\{u_0^{(i)}\}_{i=1}^N$ on $M$ and
compute for each the finite-time Lyapunov exponent
$\lambda_{\max}^{(i)} = \lambda_{\max}(u_0^{(i)})$.
We then define an indicator function
\begin{equation}
R(u_0^{(i)}) =
\begin{cases}
1, & \text{if } |\lambda_{\max}^{(i)}| < \varepsilon_{\mathrm{tol}},\\[3pt]
0, & \text{otherwise},
\end{cases}
\end{equation}
where $\varepsilon_{\mathrm{tol}}$ is a small threshold (typically
$10^{-2}$).
The weighted fraction of quasi-periodic orbits is then approximated by
\begin{equation}
m_\rho(V)
  \approx
  \frac{\displaystyle\sum_{i=1}^N \rho(u_0^{(i)})\,R(u_0^{(i)})}
       {\displaystyle\sum_{i=1}^N \rho(u_0^{(i)})}.
\end{equation}

This definition coincides with the analytic expression
\[
m_\rho(V)
   = \frac{\mu_\rho(\mathcal{T}_\rho(V))}{\mu_\rho(M)}
   = \frac{\displaystyle\int_{\mathcal{T}_\rho(V)} \rho\,\Omega}
           {\displaystyle\int_M \rho\,\Omega},
\]
up to numerical sampling error, since
$\mathbf{1}_{\{|\lambda_{\max}|<\varepsilon_{\mathrm{tol}}\}}$
acts as a discretized characteristic function of the invariant-torus
region $\mathcal{T}_\rho(V)$.

\paragraph{Algorithmic implementation}

The numerical scheme proceeds as follows:
\begin{enumerate}
\item Select parameters $(\varepsilon,\delta,\alpha)$ and define
      the weighted density $\rho$ and vector field
      $V=(1/\rho)(Lu+\alpha N(u))$, ensuring
      $\operatorname{div}(\rho V)=0$.
\item Choose a grid of initial conditions
      $u_0^{(i)}=(x_1^{(i)},y_1^{(i)},x_2^{(i)},y_2^{(i)})$ in $M$.
\item For each $u_0^{(i)}$, integrate the system and its variational
      equations up to a finite time $T$ using a fixed step $dt$.
\item Compute the finite-time Lyapunov exponent
      \[
      \lambda_{\max}^{(i)} = \frac{1}{T}
      \sum_{k=1}^{T/dt} \log\frac{\|\delta u_k\|}{\|\delta u_{k-1}\|}.
      \]
\item Classify each trajectory as regular or irregular according to the
      threshold $\varepsilon_{\mathrm{tol}}$.
\item Evaluate the weighted ratio
      \[
      m_\rho(V)
        \approx
        \frac{\sum_i \rho(u_0^{(i)})\,R(u_0^{(i)})}
             {\sum_i \rho(u_0^{(i)})}.
      \]
\end{enumerate}

\subsection{Application: A Nonlinear Weighted Example with Partial Integrability}

To illustrate how the functional $m_\rho(V)$ can take intermediate values
between $0$ and $1$, we construct an explicit analytic vector field on
$\mathbb{R}^4$ that admits a Jacobi multiplier but whose dynamics are only
partially integrable. This example shows how the transition from
quasi-integrable to irregular behavior can occur while preserving a
weighted volume form. Let $(x_1,y_1,x_2,y_2)$ be canonical coordinates on $\mathbb{R}^4$, and let
\begin{equation}
\rho(x_1,y_1,x_2,y_2)
   = 1 + \varepsilon(x_1^2 + y_1^2 + x_2^2 + y_2^2),
\end{equation}
where $\varepsilon>0$ is a fixed parameter.
We consider analytic vector fields of the form
\begin{equation}
V = \frac{1}{\rho}(Lu + \alpha N(u)),
\qquad
u = (x_1,y_1,x_2,y_2)^\top,
\end{equation}
where $L$ is the constant skew-symmetric matrix
\begin{equation}
L =
\begin{pmatrix}
 0 & 1 & 0 & \delta \\
 -1 & 0 & -\delta & 0 \\
 0 & \delta & 0 & 1 \\
 -\delta & 0 & -1 & 0
\end{pmatrix},
\end{equation}
and $N(u)$ is a cubic nonlinear term preserving the Euclidean divergence,
\begin{equation}
N(u) =
\begin{pmatrix}
x_1^3 - 3x_1y_1^2 \\[2pt]
y_1^3 - 3y_1x_1^2 \\[2pt]
x_2^3 - 3x_2y_2^2 \\[2pt]
y_2^3 - 3y_2x_2^2
\end{pmatrix}.
\end{equation}
The parameters $\delta$ and $\alpha$ control the linear coupling and
nonlinear perturbation, respectively.

We verify that this vector field satisfies
\begin{equation}
\operatorname{div}(\rho V) = 0,
\end{equation}
so that $\rho$ is a Jacobi multiplier for $V$ and
$L_V(\rho\,\Omega)=0$, where $\Omega = dx_1\wedge dy_1\wedge dx_2\wedge dy_2$.

Indeed, using $\rho V = Lu + \alpha N(u)$, we compute
\begin{equation}
\operatorname{div}(\rho V)
 = \operatorname{div}(Lu) + \alpha\,\operatorname{div}(N(u)).
\end{equation}
Since $L$ is constant and skew-symmetric, $\operatorname{div}(Lu)=\operatorname{tr}(L)=0$.
We now check that $\operatorname{div}(N(u))=0$ explicitly:
\begin{align}
\operatorname{div}(N)
&= \frac{\partial N_1}{\partial x_1}
 + \frac{\partial N_2}{\partial y_1}
 + \frac{\partial N_3}{\partial x_2}
 + \frac{\partial N_4}{\partial y_2} \nonumber\\[4pt]
&= (3x_1^2 - 3y_1^2) + (3y_1^2 - 3x_1^2)
 + (3x_2^2 - 3y_2^2) + (3y_2^2 - 3x_2^2) = 0.
\end{align}
Therefore $\operatorname{div}(\rho V)=0$ identically, so $V$
preserves the weighted measure $d\mu_\rho = \rho\,\Omega$ and lies within
the analytic class considered in Section~6.

The vector field components can be written explicitly as
\begin{equation}
\begin{aligned}
\dot{x}_1 &= \tfrac{1}{\rho}\Big[(y_1 + \delta y_2) + \alpha(x_1^3 - 3x_1y_1^2)\Big],\\[2pt]
\dot{y}_1 &= \tfrac{1}{\rho}\Big[-(x_1 + \delta x_2) + \alpha(y_1^3 - 3y_1x_1^2)\Big],\\[2pt]
\dot{x}_2 &= \tfrac{1}{\rho}\Big[(y_2 + \delta y_1) + \alpha(x_2^3 - 3x_2y_2^2)\Big],\\[2pt]
\dot{y}_2 &= \tfrac{1}{\rho}\Big[-(x_2 + \delta x_1) + \alpha(y_2^3 - 3y_2x_2^2)\Big].
\end{aligned}
\end{equation}

\begin{algorithm}[H]
\SetAlgoLined
\caption{Numerical evaluation of the weighted partial integrability functional $m_\rho(V)$}
\KwIn{Parameters $(\varepsilon,\delta,\alpha)$, grid resolution $n_x$, time step $dt$, final time $t_{\max}$, Lyapunov tolerance $\varepsilon_{\mathrm{tol}}$}
\KwOut{Estimated value of $m_\rho(V)$}

\textbf{Initialization:}
Generate a uniform grid of initial conditions
$\{u_0^{(i)}=(x_1^{(i)},y_1^{(i)},x_2^{(i)},y_2^{(i)})\}_{i=1}^{N}$\;
Set accumulators $S_{\mathrm{num}}=0$, $S_{\mathrm{den}}=0$\;

\textbf{Definition of system:}
Define the weighted density $\rho(u)=1+\varepsilon(x_1^2+y_1^2+x_2^2+y_2^2)$\;
Define the nonlinear term 
$N(u)=(x_1^3-3x_1y_1^2,\,y_1^3-3y_1x_1^2,\,x_2^3-3x_2y_2^2,\,y_2^3-3y_2x_2^2)$\;
Define the weighted vector field 
$V(u)=\rho(u)^{-1}(Lu+\alpha N(u))$, with $L$ the linear coupling matrix\;

\textbf{Loop over initial conditions:}
\ForEach{$u_0^{(i)}$}{
Integrate $\dot u = V(u)$ up to $t=t_{\max}$ using a fixed-step explicit Euler method\;
In parallel, integrate $\dot{\delta u}=DV(u)\,\delta u$ to estimate $\lambda_{\max}(u_0^{(i)})$\;
\If{$|\lambda_{\max}(u_0^{(i)})|<\varepsilon_{\mathrm{tol}}$}{
$R(u_0^{(i)})=1$ (regular trajectory)\;
}
\Else{
$R(u_0^{(i)})=0$ (irregular trajectory)\;
}
Update weighted sums:
$S_{\mathrm{num}}\mathrel{+}= \rho(u_0^{(i)})R(u_0^{(i)})$, 
$S_{\mathrm{den}}\mathrel{+}= \rho(u_0^{(i)})$\;
}
Compute $m_\rho(V)=S_{\mathrm{num}}/S_{\mathrm{den}}$\;
\Return $m_\rho(V)$ and, optionally, the spatial distribution of $\lambda_{\max}$\;
\end{algorithm}

The numerical experiments were carried out using a direct implementation
of the weighted vector field in \textsc{Matlab}.
The integration of the system and of the associated variational
equations was performed with a first--order explicit Euler scheme with
fixed time step $dt=0.01$.
For analytic vector fields of moderate stiffness such as those
considered here, this simple integrator provides sufficient precision to
evaluate finite--time Lyapunov exponents reliably when the integration
time $T$ is large.
The Jacobian matrix $DV(u)$ was computed numerically at each step using
forward finite differences with increment $h=10^{-6}$.

\paragraph{Algorithmic setup}
The system parameters were fixed as
\[
\varepsilon=0.5, \qquad \delta=0.3,
\qquad \alpha\in\{0,0.1,0.3\},
\qquad T=1500, \qquad dt=0.01,
\]
and the quasi--periodicity threshold was set to
$\varepsilon_{\mathrm{tol}}=10^{-2}$.
The initial conditions were sampled on a uniform grid of $5\times5$
points in $(x_1,y_1)$, with $(x_2,y_2)=(0.7,0)$.
For each initial condition $u_0^{(i)}$, the maximal Lyapunov exponent
$\lambda_{\max}(u_0^{(i)})$ was computed by integrating the variational
equations
\[
\dot{\delta u}=DV(u)\,\delta u,
\qquad \|\delta u(0)\|=1,
\]
and accumulating the average exponential rate of growth
\[
\lambda_{\max}(u_0^{(i)}) =
\frac{1}{T}\sum_{k=1}^{T/dt}\log
\frac{\|\delta u_k\|}{\|\delta u_{k-1}\|}.
\]

\paragraph{Integrable regime}
When $\alpha=0$, the system reduces to the linear weighted model
\[
\dot{u} = \tfrac{1}{\rho}Lu,
\]
which is analytically conjugate to a pair of coupled harmonic oscillators
after a time reparametrization.
In this case, all trajectories lie on invariant two-dimensional tori and
$m_\rho(V)=1$.

\begin{figure}[H]
\centering
\includegraphics[width=0.6\linewidth]{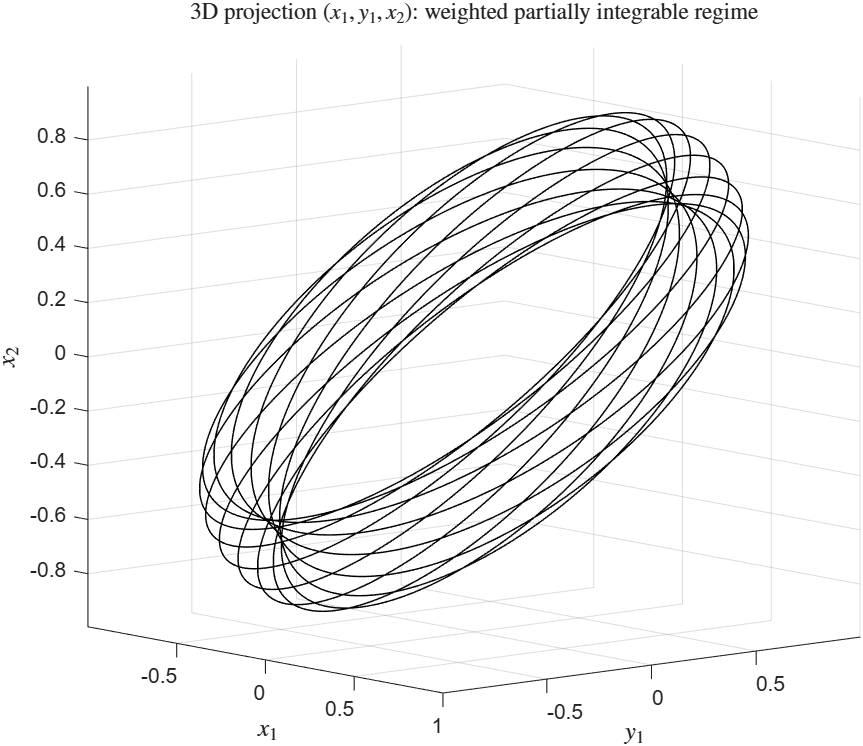}
\caption*{
\small 
3D projection of a trajectory $(x_1,y_1,x_2)$ for the weighted nonlinear divergence-free system
\[
V = \frac{1}{\rho}(Lu + \alpha N(u)), \qquad 
\rho = 1 + \varepsilon(x_1^2 + y_1^2 + x_2^2 + y_2^2),
\]
with parameters $\varepsilon=0.5$, $\delta = 0.3$, and $\alpha = 0$.
In this exactly integrable regime, all trajectories lie on smooth
weighted invariant tori, corresponding to quasi-periodic motion with no
torus deformation. The weighted partial integrability functional reaches
its maximal value $m_\rho(V)=1$, indicating that the entire weighted
phase space is foliated by invariant tori.
}
\label{fig:weighted_integrable}
\end{figure}

\paragraph{Partially integrable regime}
For small nonlinear coupling, e.g.\
$\varepsilon=0.5$, $\delta=0.3$, $\alpha=0.1$,
the system remains weighted-volume-preserving but the nonlinear terms
introduce angular shear and amplitude modulation.
Numerically, the trajectories exhibit both regular and irregular
regions in phase space, producing intermediate values
$0 < m_\rho(V) < 1$.
A sample numerical computation using the Lyapunov-based algorithm described
above gives
\begin{equation}
m_\rho(V) \approx 0.69,
\end{equation}
indicating that approximately half of the weighted measure is occupied by
quasi-periodic invariant tori.

\begin{figure}[H]
\centering
\includegraphics[width=0.6\linewidth]{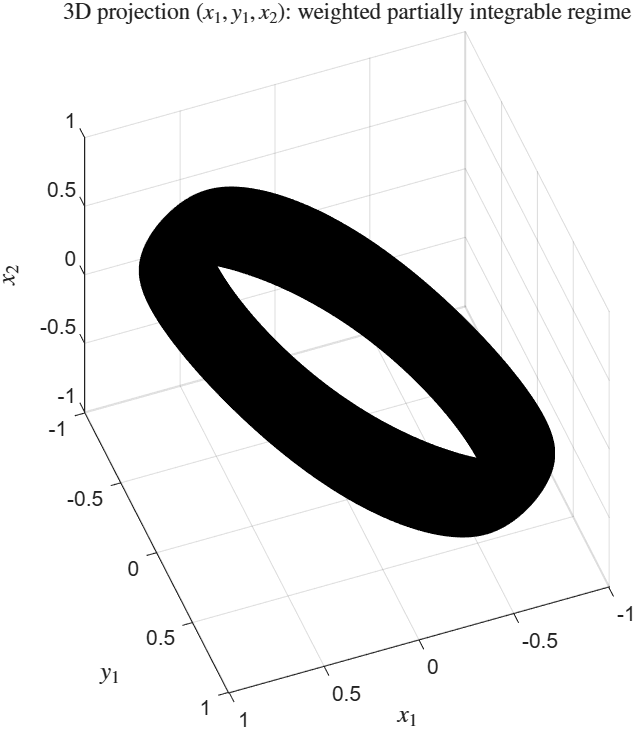}
\caption*{\small
Numerical simulation of the weighted nonlinear divergence-free system
in the intermediate regime ($m_\rho(V)\approx 0.69$).
The 3D projection of the trajectory $(x_1,y_1,x_2)$ shows a deformed,
quasi-toroidal structure: the orbit remains confined within a bounded
region and exhibits slow phase mixing, yet lacks the perfect
periodicity of the integrable case.
This illustrates a partially integrable regime in which invariant tori
persist only in a weighted sense under the nonuniform density
$\rho=1+\varepsilon(x_1^2+y_1^2+x_2^2+y_2^2)$.
The trajectory does not fill the entire domain, indicating residual
coherence, but it no longer closes upon itself, marking the onset of
quasi-ergodic behavior.
}
\end{figure}

\paragraph{Irregular regime}
For stronger nonlinearity, e.g.\ $\alpha \ge 0.3$,
the invariant tori break up entirely, the Lyapunov spectrum becomes
positive over most initial conditions, and
\[
m_\rho(V) \approx 0.
\]
The vector field remains divergence-free with respect to $d\mu_\rho$, but
its flow becomes fully irregular, analogous to the stochastic layer
formation in KAM theory.

\begin{figure}[H]
\centering
\includegraphics[width=0.6\linewidth]{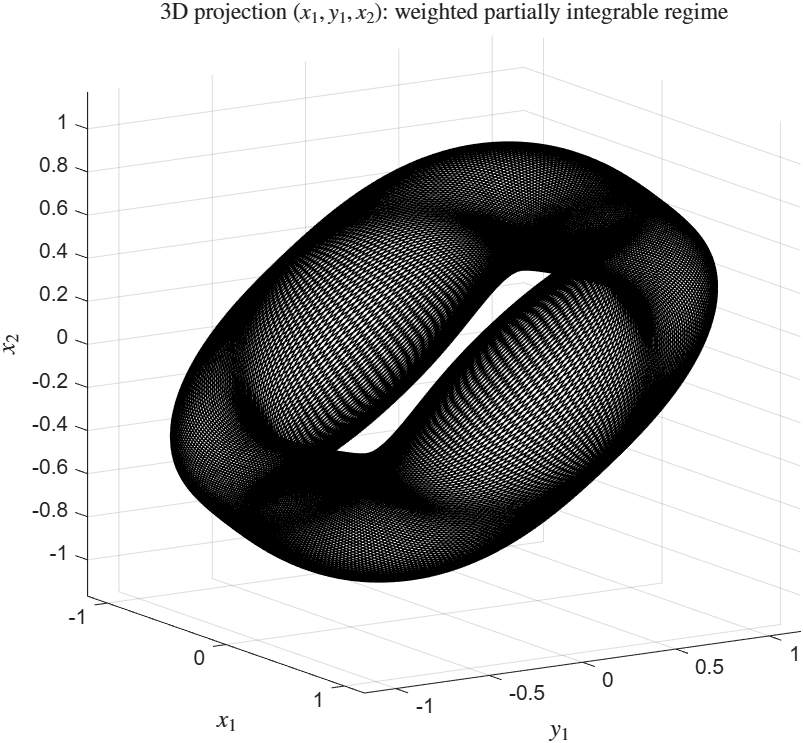}
\caption*{\small
3D projection of a trajectory $(x_1,y_1,x_2)$ for the weighted nonlinear divergence-free system
\[
V = \frac{1}{\rho}(Lu + \alpha N(u)), \qquad 
\rho = 1 + \varepsilon(x_1^2 + y_1^2 + x_2^2 + y_2^2),
\]
with parameters $\varepsilon = 0.5$, $\delta = 0.3$, and $\alpha = 0.5$.
The orbit is no longer confined to a smooth invariant torus but fills a thickened
region of phase space, indicating the onset of quasi-ergodic behavior.
This corresponds to the {\em weighted partially integrable regime},
in which some invariant tori persist while others are destroyed.
As $\alpha$ increases, nonlinear effects distort the angular frequencies,
leading to resonance overlap and progressive breakup of tori,
and consequently to a decrease of the weighted integrability functional
$m_{\rho}(V)$.}
\label{fig:weighted_partial_integrable}
\end{figure}

This example demonstrates how weighted volume preservation and partial
integrability can coexist:
the measure $d\mu_\rho$ remains invariant under $V$ for all parameter
values, yet the geometry of invariant sets changes continuously with
$\alpha$:
\[
m_\rho(V)=
\begin{cases}
1, & \text{exactly integrable regime } (\alpha=0),\\[3pt]
0<m_\rho(V)<1, & \text{mixed quasi-integrable regime } (0<\alpha<0.3),\\[3pt]
0, & \text{irregular regime } (\alpha\ge0.3).
\end{cases}
\]
Hence, the functional $m_\rho(V)$ provides a quantitative measure of the
extent to which weighted-volume-preserving dynamics remain integrable.

\paragraph{Convergence and numerical accuracy}
The reliability of the estimated $m_\rho(V)$ was assessed by varying the
main computational parameters:
\begin{enumerate}
\item \emph{Integration time $T$.} Doubling $T$ from $1500$ to $3000$
      changes the average values of $\lambda_{\max}$ by less than
      $5\times10^{-3}$, leading to relative variations of $m_\rho(V)$
      below~$1\%$.
\item \emph{Time step $dt$.} Halving the step to $dt=0.005$ modifies
      the final value of $m_\rho(V)$ by less than $2\%$.
      This indicates good numerical stability of the Euler discretization
      for the chosen time scales.
\item \emph{Grid resolution.} Increasing the grid from $5\times5$ to
      $7\times7$ initial conditions produces negligible change in the
      weighted average ($<0.02$ in $m_\rho(V)$), confirming that the
      phase--space sampling is sufficiently dense.
\end{enumerate}

\section{Conclusion}

One could argue that the numerical evaluation of $m_\rho(V)$ is based on Lyapunov exponents,
but the two quantities play different conceptual roles.
Lyapunov exponents measure local sensitivity to initial conditions along
individual trajectories, whereas the functional $m_\rho(V)$ captures a
global, measure-theoretic property of the flow: the fraction of the
weighted phase space occupied by quasi-integrable motion.
In particular,
\[
m_\rho(V)
   = \frac{\mu_\rho(\mathcal{T}_\rho(V))}{\mu_\rho(M)}
\]
extends the classical notion of the ``measure of surviving invariant
tori'' from KAM theory to analytic vector fields preserving a weighted
volume form $d\mu_\rho=\rho\,\Omega$.
Thus, while Lyapunov exponents suffice to distinguish regular from
chaotic trajectories, the functional $m_\rho(V)$ quantifies how much of
the system remains geometrically organized by invariant structures.
It therefore provides a global, quantitative indicator of partial
integrability within the broader Jacobi--multiplier class.

So, the main contribution of this paper is the introduction of a concrete
 framework for evaluating the weighted partial integrability
functional $m_\rho(V)$ for analytic vector fields satisfying
$\operatorname{div}(\rho V)=0$.

The functional $m_\rho(V)$ provides a quantitative measure of the
fraction of the weighted invariant measure
$d\mu_\rho = \rho\,\Omega$ that is occupied by quasi-integrable
invariant tori of the vector field~$V$.
Equivalently, it quantifies the degree of persistence of integrable
geometric structure within the class of analytic vector fields that
preserve a weighted volume form.

This functional thus extends the classical KAM measure of surviving
tori to vector fields preserving a nonuniform density~$\rho$, providing
a global geometric indicator of integrability in the weighted setting.

A second main contribution of this work is the construction of a
numerical framework that allows the explicit computation of the
weighted partial integrability functional~$m_\rho(V)$.
While the functional itself arises from the geometric and
measure-theoretic extension of KAM ideas to weighted
volume-preserving systems, its practical evaluation requires a new
algorithmic approach.
The proposed numerical scheme combines local stability diagnostics,
through finite-time Lyapunov exponents, with a global averaging
procedure over the invariant weighted measure
$d\mu_\rho=\rho\,\Omega$.
This yields a computable quantity that estimates how much of the
weighted phase space remains organized by quasi-periodic invariant
tori.
	\section*{Acknowledgments}
	The research of author is supported by
NSFC (Grant No.
12401234). 
	$\\$
	
	\noindent$\mathbf{Conflict\;of\;interest\;statement.}$ On behalf of all authors, the corresponding author states that there is no conflict of interest.
	
	$\\$
	\noindent$\mathbf{Data\;availability.}$ Data sharing is not applicable to this article as no new data were created or analyzed in this study.
	

\begin{thebibliography}{00}
	\bibitem{Arnold} V. Arnold and B. A. Khesin, Topological Methods in Hydrodynamics, Springer-Verlag, New York, 1998.
	\bibitem{Baldi}  P. Baldi, R. Montalto,  Quasi-periodic incompressible Euler flows in 3D, Adv. Math. 384,  74 pp,  (2021).
		\bibitem{Bogoyavlenskij} O. I. Bogoyavlenskij, Extended integrability and bi-Hamiltonian systems, Commun. Math. Phys. 196, 19-51, (1998). 
	\bibitem{Bogoyavlenskij2} O. I. Bogoyavlenkij,  Theory of tensor invariants of integrable Hamiltonian systems. II. Theorem on symmetries and its applications, Comm. Math. Phys. 184, 301-365,  (1997).
	\bibitem{Brouzet} R. Brouzet,  About the existence of recursion operators for completely integrable Hamiltonian systems
	near a Liouville torus, J. Math. Phys. 34, 1309–1313, (1993).
	\bibitem{Cariñena2}  J. F. Cariñena,  P. Guha,  Lichnerowicz-Witten differential, symmetries and locally conformal symplectic structures, J. Geom. Phys. 210, 24 pp, (2025).
	\bibitem{Cariñena}  J. F. Cariñena,  J. de Lucas and  M. F. Rañada, Jacobi multipliers, non-local symmetries and
	nonlinear oscillators, J. Math. Phys. 18 pp,  (2015).
	\bibitem{Cariñena3} J. F. Cariñena,   J. Grabowski,  J. de Lucas and  C.  Sardón, 
	Dirac-Lie systems and Schwarzian equations,
	J. Differential Equations. 257,  2303–2340,  (2014).
	\bibitem{Cheng}  C. Q. Cheng and  Y. S. Sun, Existence of invariant tori in tree-dimensional measure-preserving mappings,
	Celest. Mech. Dyn. Astronom. 47, 275–292, (1989/90)
	\bibitem{Llave} R. de la Llave,  Recent progress in classical mechanics. Mathematical physics, X (Leipzig, 1991), 3–19,
	Springer, Berlin (1992).
	\bibitem{Enciso}  A. Enciso, D. Peralta-Salas,  F. Torres de Lizaur,  Quasi-periodic solutions to the incompressible Euler equations in dimensions two and higher. J. Differential Equations 354 (2023), 170–182. 
	\bibitem{Feng} K. Feng and  D. L. Wang, Dynamical systems and geometric construction of algorithms, Computational
	mathematics in China, 1–32, Contemp. Math., 163, American Mathematical Society, Providence, RI
	(1994).
	\bibitem{Gavrilov} A. V. Gavrilov,  A steady Euler flow with compact support. Geom. Funct. Anal. 29, 190–197, (2019).
	\bibitem{Goriely}   A. Goriely, Integrability and Nonintegrability of Dynamical Systems (Singapore: World
	Scientific), (2001)
		\bibitem{Haller1998} G. Haller, I. Mazi\'c, Reduction of three-dimensional, volume-preserving flows with symmetry, Nonlinearity. 11, 319-339,  (1998). 
		\bibitem{Herman} M. Herman,  Topological stability of the Hamiltonian and volume-preserving dynamical systems. Lecture
		at the International Conference on Dynamical Systems, Evanston, Illinois, (1991).
		\bibitem{Holmes} P. J. Holmes,  Some remarks on chaotic particle paths in time-periodic, three-dimensional swirling ﬂows,
		Fluids and plasmas: geometry and dynamics (Boulder, Colo., 1983), 393–404, Contemp. Math., 28,
		American Mathematical Society, Providence, RI (1984).
		\bibitem{Huang} K. Y. Huang,  S. Y. Shi and S. L. Yang, Differential Galoisian approach to Jacobi integrability of general analytic dynamical systems and its application, Sci. China Math. 66, 1473–1494, (2023). 
		\bibitem{Jacobi}  C. G. J. Jacobi, Theoria novi multiplicatoris systemati aequationum differentialium vulgarium
		applicandi, J. Reine Angew. Math. (Crelle J.), 199–268,  1844.
		\bibitem{Khesin} B. Khesin, S. Kuksin and D. Peralta-Salas,  KAM theory and the 3D Euler equation, Adv.
		Math. 267, 498–522, (2014).
		\bibitem{Kolmogorov} A. N. Kolmogorov,  The general theory of dynamical systems and classical mechanics, In: Proc. Intern.
		Congr. Math. (1954), 1. Amsterdam: North-Holland Publ. Co. 315–333,  (1957).
		\bibitem{Koiller}  J. Koiller,  Reduction of some classical non-holonomic systems with symmetry, Arch. Ration.
		Mech. Anal. 118, 113–148, (1991).
		\bibitem{Kozlov} V. V. Kozlov, The Euler-Jacobi-Lie integrability theorem, Regul. Chaotic Dyn. 18,  329–343, (2013).
		\bibitem{Lau}  Y. T. Lau and J. M. Finn,  Dynamics of a three-dimensional incompressible ﬂow with stagnation points, Phys.
		D. 57, 283–310, (1992).
		\bibitem{Lichnerowicz} A. Lichnerowicz,  Les Varieties de Poisson et leurs algebres de Lie associes, J. Differential Geom.
		12, 253–300,  (1977).
		\bibitem{Li} Y. Li and Y. F. Yi, Persistence of invariant tori in generalized Hamiltonian systems, Ergodic Theory Dyn.
		Syst. 22, 1233–1261, (2002).
			\bibitem{Liouville} J. Liouville, Note sur l'integration des equations differentielles de la dynamique, J. Math. Pures Appl. 20, 137-138, (1855).
			\bibitem{Mezic1994} I. Mezi\'c, S. Wiggins, On the integrability and perturbation of three-dimensional fluid flows with symmetry, J. Nonlinear Sci. 4,  157-194, (1994).
			\bibitem{Quispel}  G. R. W. Quispel, Volume-preserving integrators. Phys. Lett. A. 206, 26–30, (1995)
			\bibitem{Sotiropoulos} F. Sotiropoulos,  Y. Ventikos and T. C. Lackey,  Chaotic advection in three-dimensional stationary vortexbreakdown
			bubbles: S˘il’nikov’s chaos and the devil’s straircase, J. Fluid Mech. 444, 257–297, (2001).
	\bibitem{Taylor} M. Taylor, Partial Differential Equations, Springer-Verlag, Berlin, 1996.
	\bibitem{Torres} F. Torres de Lizaur,  Chaos in the incompressible Euler equation on manifolds of high dimension, Invent. Math. 228 ,  687–715, (2022). 
	\bibitem{Wiggins1990} S. Wiggins, Introduction to applied nonlinear dynamical systems and chaos, Springer-Verlag, New York, 1990. 
	\bibitem{Wells} R. O. Wells, Differential Analysis on Complex Manifolds, Springer-Verlag, Berlin, 1980.
	\bibitem{Whittaker}  E. T. Whittaker, A Treatise on the Analytical Dynamics of Particles and Rigid Bodies: With an
	Introduction to the Problem of Three Bodies (Cambridge: Cambridge University Press), (1989)
		\bibitem{Xia1992} Z. H. Xia, Existence of invariant tori in volume-preserving diffeomorphisms, Ergodic Theory Dynam. Systems. 12, 621-631, (1992). 
			\bibitem{Zhao2} X. F. Zhao and Y. Li, The reduction, ﬁrst integral and Kam Tori for n-dimensional volume-
		preserving systems, J. Dyn. Diﬀ. Equat. 37, 539–558, (2023).
		\bibitem{Zhao3} X. F. Zhao, Completely integrable system with Jacobi multipliers and its KAM stability, Discrete Contin. Dyn. Syst.,
		45, 1870-1890, (2025).
		\bibitem{Zhao4} X. F. Zhao and Y. Li, Geometric properties of vector fields on manifold, Acta Math. Sin. (Engl. Ser.), (To appear).
		\bibitem{Zhao5} X. F. Zhao and Y. Li, KAM in generalized Hamiltonian systems with multi-scales, J. Dyn. Diff. Equat. 35, 2971–2995, (2022).
	\end{thebibliography}
\end{document}